\newtheorem{lemma}{Lemma}
\newcommand{\set}[1]{\left\{ #1 \right\}}
\newcommand{\N}{\mathbb{N}}
\newtheorem{theorem}{Theorem}
\newtheorem{corollary}{Corollary}
\newtheorem{definition}{Definition}
\newtheorem{remark}{Remark}
\newcommand{\game}{\mathcal{G}} 
\newcommand{\A}{\mathcal{A}} 
\newcommand{\Win}{\textsc{Win}} 
\newcommand{\Lose}{\textsc{Lose}} 
\newcommand{\VE}{V_E} 
\newcommand{\VA}{V_A} 
\newcommand{\WE}{W_E} 
\newcommand{\WA}{W_A} 
\newcommand{\play}{\pi} 
\newcommand{\Pre}{\text{Pre}} 
\newcommand{\Lift}{\text{Lift}} 
\newcommand{\AllEvenCycles}{\text{AllEvenCycles}} 
\newcommand{\AllOddCycles}{\text{AllOddCycles}} 
\newcommand{\Safe}{\text{Safe}}
\newcommand{\Reach}{\text{Reach}}
\newcommand{\Parity}{\text{Parity}} 
\newcommand{\F}{\mathcal{F}} 
\newcommand{\G}{\mathcal{G}}
\begin{document}


\title{An Optimal Value Iteration Algorithm for Parity Games}
\author{Nathana{\"e}l Fijalkow}
\date{CNRS, LaBRI, Bordeaux, France\\Alan Turing Institute, London, United Kingdom\\University of Warwick, United Kingdom}


\maketitle

\begin{abstract}
The quest for a polynomial time algorithm for solving parity games gained momentum in 2017 when two different quasipolynomial time algorithms were constructed. In this paper, we further analyse the second algorithm due to Jurdzi{\'n}ski and Lazi{\'c} and called the succinct progress measure algorithm. It was presented as an improvement over a previous algorithm called the small progress measure algorithm, using a better data structure.

The starting point of this paper is the observation that the underlying data structure for both progress measure algorithms are (subgraph-)universal trees. We show that in fact any universal tree gives rise to a value iteration algorithm {\`a} la succinct progress measure, and the complexity of the algorithm is proportional to the size of the chosen universal tree. We then show that both algorithms are instances of this generic algorithm for two constructions of universal trees, the first of exponential size (for small progress measure) and the second of quasipolynomial size (for succinct progress measure).

The technical result of this paper is to show that the latter construction is asymptotically tight: universal trees have at least quasipolynomial size. This suggests that the succinct progress measure algorithm of Jurdzi{\'n}ski and Lazi{\'c} is in this framework optimal, and that the polynomial time algorithm for parity games is hiding someplace else.

\end{abstract}

\section{Introduction}

The notion of parity games is fundamental in the study of logic and automata.
Most often fundamental notions have very simple definitions and they clearly capture a key aspect of the general problem of interest.
This cannot be said of parity games: the definition takes a bit of time to digest and once understood it is not clear how central it may be.
Indeed, it took years, if not decades, to formulate the right notion to look at.

\vskip1em
Parity games first appeared in the context of automata over infinite trees.
The first and natural idea to define automata over infinite objects is to have so-called Muller conditions, 
where to determine whether a run is accepted one considers which states appear infinitely often.
One can develop a rich theory relating automata and logic over infinite trees using Muller automata, 
but some properties are very hard to prove, as witnessed for instance by the technical ``tour de force'' of Rabin 
for proving the decidability of monadic second-order logic~\cite{Rabin69}.
The parity condition appeared in an effort to better understand this proof, and its importance became manifest: 
working with parity automata rather than Muller automata gives an arguably short and understandable proof of Rabin's celebrated result.
It was introduced independently by Mostowski~\cite{Mostowski84,Mostowski91}, who called them ``Rabin chain condition'',
and Emerson and Jutla~\cite{EJ91}.

The crucial property making the technical developments easier is the positional determinacy of parity games, which is the key result
used in many constructions for parity automata.
In a precise sense, one can show that the parity objectives form the largest class of Muller objectives enjoying positional determinacy,
a result due to Zielonka~\cite{Zielonka98}, see also~\cite{DJW97}.

\vskip1em
The main algorithmic problem about parity games is to solve them, i.e. to construct an algorithm taking as input a parity game
and determining whether the first player Eve has a winning strategy.
A strong motivation for constructing efficient algorithms for this problem is the works of Emerson and Jutla~\cite{EJ91},
who showed that solving parity games is linear-time equivalent to the model-checking problem for modal $\mu$-calculus.
This logical formalism is an establised tool in program verification, and a common denominator to a wide range of modal, temporal and fixpoint logics
used in various fields.

\vskip1em
The literature on algorithms for solving parity games is vast.
Up until 2017, the best algorithms were subexponential.
Two breakthroughs came in 2017: first the succinct counting algorithm of Calude et al~\cite{CJKL017}, 
and then the succinct progress measure of Jurdzi{\'n}ski and Lazi{\'c}~\cite{JL17}, 
both solving parity games in quasipolynomial time, more precisely in $n^{O(\log(d))}$,
for $n$ the number of vertices and $d$ the number of priorities.

\vskip1em
The aim of this paper is to further analyse the second algorithm and to relate it to the notion of universal trees.
Under this new light, we construct a mildly improved algorithm and prove its optimality within this framework.



\section{Definitions}
\label{sec:parity}

The \textit{arena} is the place where the game is played:
the first component is a directed graph given by a set $V$ of vertices and a set $E \subseteq V \times V$ of edges.
Additionally, an arena features two sets $\VE$ and $\VA$ of vertices such that $V = \VE \uplus \VA$:
the set $\VE$ is the set of vertices controlled by Eve, and the set $\VA$ is those controlled by Adam.
We represent vertices in $\VE$ by circles, and vertices in $\VA$ by squares, and also say that $v \in \VE$
belongs to Eve, and similarly for Adam.
The relevant algorithmic parameters are the number $n$ of vertices and $m$ of edges of the arena.

\vskip1em
The interaction between the two players consists in moving a token on the vertices of the arena.
It is initially on the vertex $v_0$, starting the game.
When the token is in some vertex, the player who controls the vertex chooses an outgoing edge
and pushes the token along this edge to the next vertex.
To ensure not to get stuck we usually, although not always, assume that from any vertex there is an outgoing edge.
The outcome of this interaction is the infinite sequence of vertices traversed by the token, called a \textit{play}.
Plays are usually written $\play$, with $\play_i$ the $i$\textsuperscript{th} vertex of $\play$ (indexed from~$0$),
and $\play_{\le i}$ the prefix up to length $i$.
We let $V^\omega$ denote the set of plays, \textit{i.e.} infinite sequences of vertices, and $V^*$ the set of paths,
\textit{i.e.} finite sequences of vertices.

\vskip1em
A \emph{strategy} for a player is a full description of his or her moves in all situations.
Formally, a strategy is a function $\sigma : V^* \to E$ mapping any path to an edge.
Traditionally, strategies for Eve are written $\sigma$, and strategies for Adam are written~$\tau$.
We say that a play $\play$ is consistent with a strategy $\sigma$ for Eve if
for all $i \in \N$ such that $\play_i \in \VE$, we have $\sigma(\play_i) = (\play_i,\play_{i+1})$.
Once an initial vertex $v_0$, a strategy $\sigma$ for Eve, and a strategy $\tau$ for Adam have been fixed, 
there exists a unique play starting from $v_0$ and consistent with both strategies, written $\play^{v_0}_{\sigma,\tau}$.

\vskip1em
So far we defined the rules for playing (the arena), the means to play (the strategy), it remains to explain the goals to achieve (the objective).

We fix a set $C$ of colours and equip the arena with a function $c : V \to C$ mapping vertices to colours.
An objective $\Omega$ is a subset $\Omega \subseteq C^\omega$, which we interpret as the set of winning plays.
Recall that a play is an element of $V^\omega$, so thanks to the mapping $c : V \to C$,
it induces an element of $C^\omega$. 
If the element of $C^\omega$ induced by $\play$ is in $\Omega$, we say that $\play$ satisfies $\Omega$, or that $\play$ is winning.
A strategy $\sigma$ for Eve is winning from $v_0$ if for all strategies $\tau$ for Adam, the play $\play^{v_0}_{\sigma,\tau}$ is winning.
We sometimes say that the strategy $\sigma$ ensures $\Omega$, and that Eve wins from $v_0$.

\begin{definition}[Games]
Let $C$ be a set.
\begin{itemize}
	\item An arena $\A$ is a tuple $(V,E,\VE,\VA,c)$ where $(V,E)$ is a directed graph with $V = \VE \uplus \VA$
	and $c : V \to C$ maps vertices to colours.
	\item An objective $\Omega$ is a subset $\Omega \subseteq C^\omega$.
\end{itemize}
A game $\game$ is a pair $(\A,\Omega)$ where $\A$ is an arena and $\Omega$ an objective.
The generic algorithmic question we address is the following decision problem, later refered to as ``solving the game'':
\begin{framed}
\begin{tabular}{ll}
\textbf{INPUT}: & A game $\game$ and an initial vertex $v_0$\\
\textbf{QUESTION}: & Does Eve win from $v_0$?
\end{tabular}
\end{framed}
\end{definition}

We let $\WE(\game)$ denote the set of vertices from which Eve has a winning strategy in the game~$\game$.
When the arena is clear from the context and we consider different objectives over the same arena, 
we write $\WE(\Omega)$ for the set of vertices from which Eve has a strategy ensuring $\Omega$.

\vskip1em
We now define the \textit{parity} objectives.
Let $d \in \N$ be an even number defining the number of priorities.
The parity objective with parameter $d$ use the set of colours $\set{1,2,\ldots,d}$, which are referred to as priorities, and is defined by
\[
\Parity = \set{\play \in V^\omega \left| \begin{array}{l} \text{the largest priority appearing} \\ \text{infinitely often in } \play \text{ is even} \end{array} \right.}.
\]
We illustrate the definition on two examples.
\[
\begin{array}{c}
1\ 2\ 4\ 7\ 5\ 7\ 5\ 3\ 6\ 3\ 6\ 3\ 6\ 3\ 6\ \cdots \in \Parity \\
2\ 2\ 2\ 4\ 1\ 7\ 5\ 3\ 3\ 3\ 3\ 3\ 3\ 3\ 3\ \cdots \notin \Parity
\end{array}
\]
In the first play the two priorities which appear infinitely often are $3$ and $6$, and the largest one is $6$, which is even,
and in the second play the only priority which appears infinitely often is $3$ and it is odd.
Figure~\ref{fig:parity_game_example} presents an example of a parity game. 
The priority of a vertex is given by its label.
\begin{figure}[ht]
\centering
\includegraphics[scale=.3]{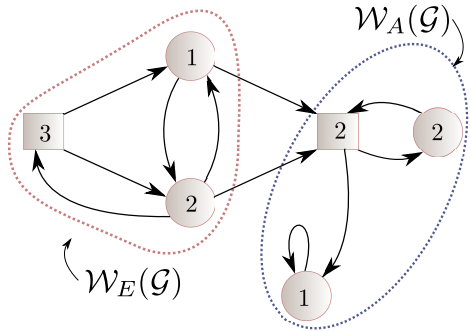}
\caption{An example of a parity game.}
\label{fig:parity_game_example}
\end{figure}

\vskip1em
This paper continues a long line of work aiming at constructing efficient algorithms for solving parity games.
Before starting the technical developments, let us discuss two important properties of parity games:
\begin{itemize}
	\item They are \emph{determined}, meaning that from any vertex, either Eve has a winning strategy or Adam has a winning strategy,
	which symbolically reads 
	\[
	\WE(\Parity) \cup \WA(\Parity) = V;
	\]
	\item The are \emph{positionally determined}, meaning that if Eve has a winning strategy, then she has a positional one,
	\textit{i.e.} of the form $\sigma : V \to E$. Such a strategy is called positional, sometimes memoryless, because it picks the next move
	only considering the current position, forgetting about the path played so far.	
\end{itemize}
The determinacy of parity games follows from very general topological theorems as for instance Martin's theorem~\cite{Martin75}.
The positional determinacy is due to Emerson and Jutla~\cite{EJ91}.

\vskip1em
\noindent\textbf{Organisation of the paper.}
In Section~\ref{sec:zielonka} we define signatures and show how analysing Zielonka's algorithm yields the existence of signatures.
This result is used in Section~\ref{sec:algorithm} for constructing and proving the correctness of the generic value iteration algorithm.
Here generic means that the algorithm is parameterised by an underlying data structure called a universal tree.
We explain how both the small progress measure and the succinct progress measure algorithms are instances of this framework.
Section~\ref{sec:bounds} shows asymptotically tight bounds on the size of universal trees.

This paper is self-contained, in particular does not rely on two properties mentioned above (determinacy and positional determinacy).
More accurately, we obtain them both in the next section as by-products of our analysis of Zielonka's algorithm.



\section{Signatures and Zielonka's algorithm}
\label{sec:zielonka}

In this section we revisit the notion of signatures for parity games,
which will be the key ingredient for the correctness proof of the generic value iteration algorithm in the next section.

The notion of signature was proposed by B{\"u}chi~\cite{Buchi83} and independently by Streett and Emerson~\cite{SE84}.
Emerson and Jutla~\cite{EJ91} used them to give a proof of positional determinacy for parity games.

\subsection*{Signatures}

We work with tuples in $[0,n]^{d/2}$ which we index by odd priorities in $[1,d]$.
For instance for $d = 8$, an example of a tuple $x$ is
\[
x = (\underbrace{2}_7, \underbrace{2}_5,\underbrace{3}_3,\underbrace{0}_1).
\]
We order tuples lexicographically, with the largest priority being the most important,
so we have $(2,2,3,0) >_{\text{lex}} (1,5,5,5)$.
For a priority $p$ and $x$ a tuple in $[0,n]^{d/2}$, we write $x_{\ge p}$ for the tuple restricted to priorities larger
than or equal to $p$. 
For the tuple $x$ above, we have $x_{\geq 5} = (2,2)$ and $x_{\geq 2} = (2,2,3)$.

We consider functions $\mu : V \to [0,n]^{d/2} \cup \set{\top}$.
It induces a set of orders on vertices called the $p$-orders: for $p$ a priority and $v,v'$ two vertices, 
we write $\mu(v) \ge_p \mu(v')$ if $\mu(v)_{\geq p} \ge_{\text{lex}} \mu(v')_{\geq p}$
and add $\top$ as the largest element for all $p$-orders.

\begin{definition}[Signatures]
Let $\game$ be a parity game with $n$ vertices and $d$ priorities.
A function $\mu : V \to [0,n]^{d/2} \cup \set{\top}$ is called a signature if it satisfies the following two properties:
\begin{itemize}
	\item If $v \in V_E$ has priority $p$, then there exists $(v,v') \in E$ such that $\mu(v) \ge_p \mu(v')$, 
	and the inequality is strict if $p$ is odd;
	\item If $v \in V_A$ has priority $p$, then for all $(v,v') \in E$ we have $\mu(v) \ge_p \mu(v')$, 
	and the inequality is strict if $p$ is odd.
\end{itemize}
\end{definition}

The notion of signatures is best explained by the following lemma, which reads: 
a signature is both a strategy for Eve and a proof that it is winning.

\begin{lemma}\label{lem:signature_correctness}
For all parity games with $n$ vertices and $d$ priorities, 
if $\mu : V \to [0,n]^{d/2} \cup \set{\top}$ is a signature and for $v \in V$ we have $\mu(v) \neq \top$, then Eve wins from $v$.
\end{lemma}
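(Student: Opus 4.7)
The plan is to extract a positional strategy for Eve directly from $\mu$ and then certify that it is winning from every vertex of finite signature. For each $v \in V_E$ with $\mu(v) \neq \top$, define $\sigma(v)$ to be an edge $(v,v')$ witnessing the first clause of the signature definition (pick arbitrarily when $\mu(v) = \top$). It then suffices to prove that every play $\play$ consistent with $\sigma$ and starting at a vertex $v$ with $\mu(v) \neq \top$ satisfies $\Parity$.

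The first step is to observe that such plays never reach a $\top$-vertex. Since $\top$ is the maximum element of every $p$-order, $\mu(u) \geq_p \mu(u')$ combined with $\mu(u) \neq \top$ forces $\mu(u') \neq \top$. At Eve's vertices this is the chosen $\sigma$-edge; at Adam's vertices it applies to every outgoing edge. An easy induction gives $\mu(\play_i) \in [0,n]^{d/2}$ for every $i$.

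The second step is an elementary monotonicity property of the $p$-orders: for $q \leq p$, the inequality $\mu(u) \geq_q \mu(u')$ implies $\mu(u) \geq_p \mu(u')$. This is because $\mu(u)_{\geq p}$ is the most significant block of $\mu(u)_{\geq q}$, and a lex-comparison of two tuples projects to a lex-comparison of their top segments. The third step is then the contradiction. Suppose the largest priority $p$ appearing infinitely often in $\play$ is odd. Pick $N$ past which no priority larger than $p$ ever appears, so $q_i := c(\play_i) \leq p$ for all $i \geq N$. By the signature condition at $\play_i$ together with monotonicity, $\mu(\play_i) \geq_p \mu(\play_{i+1})$; when moreover $q_i = p$ (which occurs infinitely often since $p$ does), the signature condition gives the strict inequality $\mu(\play_i) >_p \mu(\play_{i+1})$ directly in the $p$-order. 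This exhibits an infinite $>_p$-descending chain in the finite set $[0,n]^{d/2}$, contradicting well-foundedness of the lex order, so the largest priority appearing infinitely often must be even.

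The only genuine subtlety is the direction of the monotonicity lemma: it is restriction to the \emph{larger} priorities, i.e.\ to the most significant coordinates, that preserves lex-comparisons, and consequently the strict inequality witnessed by $>_q$ may well be lost under $>_p$ for $p > q$. This is why the strict decrease along the tail of $\play$ has to be harvested from the single case $q_i = p$; fortunately, by the choice of $p$, this case occurs infinitely often, which is exactly what the argument needs. Everything else is bookkeeping.
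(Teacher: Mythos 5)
Your proof is correct, but it takes a genuinely different route from the paper's. The paper extracts the same positional strategy $\sigma$ and then argues about \emph{cycles} consistent with $\sigma$: it considers a cycle $v_1,\dots,v_k,v_1$ whose largest priority $p$ is odd, chains the signature inequalities around the cycle, coarsens them all to the $p$-order, and derives $\mu(v_1) >_p \mu(v_1)$. It then invokes (without spelling out) the standard fact that a positional strategy all of whose consistent cycles are even is winning for parity. You instead argue directly on the tail of an arbitrary infinite play consistent with $\sigma$: past some point only priorities $\le p$ occur, monotonicity pushes every inequality into the coarser $p$-order, strictness is harvested exactly at the infinitely many positions of priority $p$, and well-foundedness of the finite lex order gives the contradiction. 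Your version is more self-contained (it never appeals to the cycles-to-plays reduction for positional strategies), and it also makes explicit the $\top$-avoidance step, which the paper leaves implicit. The paper's version is shorter once one grants the cycle characterization, and it more transparently yields that $\sigma$ is positional and avoids all odd cycles, which is the form the result is reused in. Your closing remark about the directionality of the monotonicity lemma --- coarsening to higher priorities preserves $\ge$ but can destroy strictness, so the strict drop must come from the $q_i = p$ steps --- is precisely the subtlety, and you handle it correctly.
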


\begin{proof}
We first observe that $\mu$ induces a (positional) strategy $\sigma$ on vertices $v \in \VE$ such that $\mu(v) \neq \top$.
Indeed, for $v \in \VE$ of priority $p$, by definition there exists $(v,v') \in E$ such that $\mu(v) \ge_p \mu(v')$, define $\sigma(v) = v'$.

We claim that $\sigma$ is winning on the set of vertices $v \in V$ such that $\mu(v) \neq \top$.
To this end, consider a cycle
\[
v_1, v_2, \ldots, v_k, v_1
\] 
consistent with $\sigma$,
and assume for the sake of contradiction that the largest priority in the cycle is odd.
Without loss of generality we assume $v_1$ has the largest priority in the cycle, say $p$.
We then have by definition of a signature, and noting $c(v_i)$ the priority of $v_i$:
\[
\mu(v_1) >_p \mu(v_2) \ge_{c(v_2)} \mu(v_3) \ge_{c(v_3)} \cdots \ge_{c(v_k)} \mu(v_1).
\]
Since $p$ is the largest priority in the loop we have $c(v_i) \ge p$, so in particular these inequalities hold for the coarser $p$-order~$\ge_p$:
\[
\mu(v_1) >_p \mu(v_2) \ge_p \mu(v_3) \ge_p \cdots \ge_p \mu(v_1).
\]
\textit{i.e.} $\mu(v_1) >_p \mu(v_1)$, a contradiction.
We just proved that all cycles consistent with $\sigma$ have a largest even priority, which implies that $\sigma$ is indeed winning
for the parity objective.
\end{proof}

\begin{theorem}[Existence of signatures for parity games~\cite{EJ91}]\label{thm:signature}
For all parity games with $n$ vertices and $d$ priorities, 
there exists a signature $\mu : V \to [0,n]^{d/2} \cup \set{\top}$
such that for all $v \in V$, we have $\mu(v) \neq \top$ if and only if Eve wins from $v$.
\end{theorem}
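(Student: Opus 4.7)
I would prove the theorem by induction on the number of vertices $n$, following the recursive structure of Zielonka's algorithm; the argument simultaneously yields determinacy and positional determinacy as by-products, by constructing Eve's signature and, symmetrically, a ``co-signature'' certifying Adam's winning region. By swapping players and shifting priorities by one, it suffices to treat the case where the maximum priority $d$ appearing in $\game$ is even. If the set $U$ of max-priority vertices is empty, the result follows from the inductive hypothesis applied with a smaller $d$, padding the top coordinate with zero. Otherwise, let $A = \text{Attr}_E(U)$ in $\game$ and apply the induction to $\game_1 = \game \setminus A$ (in which Adam's escape edges into $A$ are removed), which has strictly fewer vertices. This delivers a signature $\mu_1 : V \setminus A \to [0, n - |A|]^{d/2} \cup \{\top\}$.

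Case~1: $\mu_1$ is nowhere $\top$. Then Eve wins all of $\game_1$, and combining her $\mu_1$-strategy on $V \setminus A$ with the attractor strategy on $A$ wins everywhere in $\game$, since any consistent play either stays in $V \setminus A$ forever (winning by $\mu_1$) or visits $U$ infinitely often (winning because $d$ is even). I would lift $\mu_1$ to a signature $\mu$ on all of $\game$ by (i) shifting the top coordinate (indexed by priority $d-1$) of $\mu_1(v)$ up by $|A|$ on $V \setminus A$, so that it lands in $[|A|, n]$, and (ii) setting $\mu(v)$ on $A$ to be the tuple whose top coordinate is the attractor level $\ell(v) \in \{0, \dots, |A|-1\}$ (with $\ell = 0$ on $U$) and whose other coordinates are zero. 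The attractor strategy strictly decreases $\ell$, yielding the strict $p$-order inequality for every odd $p \le d-1$ inside $A$; the constraint at priority-$d$ vertices in $U$ is vacuous since no coordinate is indexed by a priority $\ge d$; and transitions where Adam escapes from $V \setminus A$ into $A$ are handled because the top coordinate of $\mu$ on $A$ never exceeds $|A|-1$, whereas on $V \setminus A$ it is at least $|A|$, giving the required strict inequality.

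Case~2: $\mu_1$ is $\top$ on a non-empty set $L \subseteq V \setminus A$. Let $B = \text{Attr}_A(L)$ in $\game$; the set $V \setminus B$ is a trap for Adam (all his moves from $V \setminus B$ stay in $V \setminus B$). Apply the inductive hypothesis to the strictly smaller subgame $\game \setminus B$ to obtain a signature $\mu_2$, and define $\mu = \mu_2$ on $V \setminus B$ and $\mu = \top$ on $B$. The signature property on $V \setminus B$ transfers from $\mu_2$ because Adam has no edges from $V \setminus B$ to $B$ in $\game$ itself, and on $B$ it holds vacuously as $\top$ is the top of every $p$-order.

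The main obstacle is the gluing in Case~1: ensuring the boundary transitions, in particular Adam's escape from $V \setminus A$ into $A$, preserve the signature property while keeping every coordinate in $[0,n]$. The shift trick resolves this by partitioning the top-coordinate range $[0,n]$ into $[|A|, n]$ for $V \setminus A$ and $[0, |A|-1]$ for $A$, using the bound $n$ exactly. A subtler issue is that Case~2 requires $L$ to equal Adam's winning region in $\game_1$, i.e.\ the ``only if'' direction of the inductive claim; this is obtained by running the induction symmetrically with Adam as the ``favoured'' player to construct a co-signature, which combined with Lemma~\ref{lem:signature_correctness} yields both determinacy and positional determinacy along the way.
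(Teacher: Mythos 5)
Your proof is correct but follows a genuinely different route from the paper's. Both are inspired by Zielonka's algorithm, but the paper does not perform the classical recursive attractor decomposition that you carry out. Instead, the paper rephrases Zielonka's algorithm as a nested fixed-point computation (Lemmas~\ref{lem:zielonka_even} and~\ref{lem:zielonka_odd}) and then extracts the signature from its \emph{trace}: for each odd priority $p$, the $p$-th coordinate $\mu_p(v)$ is the iteration index $k$ at which $v$ first enters $X_k(p)$ in the least-fixed-point iteration for priority $p$, run with $\Win = \WE(\Parity) \cap V_{\ge p}$ and $\Lose = \WA(\Parity) \cap V_{\ge p}$. That argument presupposes the winning regions (obtained from the algorithm's correctness) and then reads off iteration counts; yours builds the signature directly by induction on $|V|$, resolving the boundary between Eve's attractor $A$ of the max-priority vertices and its complement by partitioning the top coordinate's range $[0,n]$ into $[0,|A|-1]$ for attractor ranks and $[|A|,n]$ for shifted recursive values. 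Ultimately the two proofs encode the same combinatorial data --- your attractor rank plays exactly the role of the paper's fixed-point iteration index --- but your version is more self-contained, at the cost of the simultaneous co-signature induction you need in Case~2 to identify $L$ with Adam's winning region and thereby obtain the ``only if'' direction, whereas the paper sidesteps this by taking $\WE(\Parity)$ and $\WA(\Parity)$ as already-computed inputs.
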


The original proof is due to Emerson and Jutla~\cite{EJ91}.
In the remainder of this section we revisit Zielonka's algorithm with one objective in mind: 
obtaining an alternative proof of Theorem~\ref{thm:signature}.

\subsection*{Zielonka's algorithm}

We revisit the first algorithm constructed to solve parity games due to Zielonka~\cite{Zielonka98}, 
adapting ideas from~\cite{McNaughton93}.

The reader familiar with parity games may jump to the next section; this section does not contain any new results.
We hope that the mildly unusual presentation of Zielonka's algorithm can give the non-expert reader some insights into parity games,
and help reading the rest of the paper.

\vskip1em
We introduce some notations.
For a set of vertices $U \subseteq V$, we let $\Pre(U) \subseteq V$ be the set of vertices from which Eve can ensure to reach $U$ in one step:
\[
\begin{array}{lll}
\Pre(U) & = & \set{u \in V_E \mid \exists (u,v) \in E,\ v \in U} \\
        & \cup & \set{u \in V_A \mid \forall (u,v) \in E,\ v \in U}.
\end{array}
\]
We use $\overline{\Pre(U)}$ for the complement of $\Pre(U)$.
For a colour~$c$, the objective $\Reach(c)$ is satisfied by plays visiting some vertex of colour $c$ at least once, 
and $\Safe(c)$ by plays never visiting any vertex of colour $c$.

\vskip1em
Let us consider a parity game $\game$ with $d$ priorities.
We construct two recursive procedures, which take as input a (small variant of a) parity game with priorities in $[1,p]$ 
and two additional colours: $\set{\Win,\Lose}$, and output the winning set for Eve.
The vertices with colours $\Win$ or $\Lose$ are terminal: when reaching a terminal vertex, 
the game stops and one of the players is declared the winner.
Formally, the objective is
\[
\left( \Parity \cup \Reach(\Win) \right) \cap \Safe(\Lose).
\]

We write $V_p$ for the set of vertices of priority~$p$.

\subsubsection*{If the largest priority is even}

\begin{lemma}\label{lem:zielonka_even}
Consider a parity game $\game$ with priorities in $[1,p]$ with $p$ even.

Then $W_E (\left( \Parity \cup \Reach(\Win) \right) \cap \Safe(\Lose) )$ is the greatest fixed point of the operator 
\[
Y \mapsto W_E\left( 
\begin{array}{c} 
\Parity \cup \Reach \left[ \Win \cup (V_p \cap \Pre(Y)) \right] \\
\cap \\
\Safe \left[ \Lose \cup (V_p \cap \overline{\Pre(Y)}) \right] 
\end{array} \right).
\]
\end{lemma}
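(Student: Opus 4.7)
The plan is to establish $W^{*} = \nu F$, where
\[
W^{*} := W_E\bigl((\Parity \cup \Reach(\Win)) \cap \Safe(\Lose)\bigr)
\]
and $F$ denotes the operator in the statement. Since $F$ is monotone --- if $Y \subseteq Y'$ then $\Pre(Y) \subseteq \Pre(Y')$, so in $F(Y')$ the $\Reach$ target grows and the $\Safe$ forbidden set shrinks, making Eve's objective pointwise easier --- the equality will follow from Knaster--Tarski once I verify (a) $W^{*} \subseteq F(W^{*})$, whence $W^{*} \subseteq \nu F$, and (b) every post-fixed point $Y \subseteq F(Y)$ satisfies $Y \subseteq W^{*}$, whence $\nu F \subseteq W^{*}$.

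For (a), I take a strategy $\sigma$ amalgamating winning strategies from every vertex of $W^{*}$ in the original game and argue that $\sigma$ still wins the modified game with $Y = W^{*}$. A standard invariance argument shows that $\sigma$-consistent plays starting in $W^{*}$ remain in $W^{*}$: on $u \in \VE \cap W^{*}$ the strategy $\sigma$ picks a successor in $W^{*}$, and on $u \in \VA \cap W^{*}$ every successor must lie in $W^{*}$, otherwise Adam would escape to his own winning region. In particular, any visit of such a play to a vertex $u$ of priority $p$ satisfies $u \in W^{*} \cap V_p \subseteq \Pre(W^{*})$, so the extended $\Safe$ condition holds along the play; for the disjunctive part, either priority $p$ is visited, in which case $\Reach(\Win \cup (V_p \cap \Pre(W^{*})))$ is triggered at the first such visit, or it is never visited, in which case the maximum priority appearing infinitely often is strictly less than $p$ and $\Parity$ is inherited from the original winning play.

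For (b), I pick a post-fixed point $Y$ and, for every $v \in Y$, a strategy $\sigma_v$ winning the modified game from $v$, and stitch them into a history-dependent strategy $\sigma$ for the original game by maintaining a current anchor in $Y$: start from any $v \in Y$ with anchor $v$ and play $\sigma_v$; at the first visit to a vertex $u$ of priority $p$, the $\Safe$ component of the modified objective forces $u \in V_p \cap \Pre(Y)$, so there is a successor $u' \in Y$ (if $u \in \VE$ let $\sigma$ pick such an edge; if $u \in \VA$, any edge Adam takes already lands in $Y$); reset the anchor to $u'$ and continue with $\sigma_{u'}$. Any $\sigma$-consistent play then either resets infinitely often, in which case priority $p$ occurs infinitely often and is maximal, winning $\Parity$ since $p$ is even, or resets only finitely often, in which case from some point onward it coincides with a single $\sigma_v$-play that never revisits $V_p$; on that tail the $\Reach$ branch is never triggered, so $\sigma_v$'s win in the modified game must be delivered by $\Parity$, and since no vertex of priority $p$ is seen on the tail this $\Parity$ verdict coincides with the original $\Parity$ condition.

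The main obstacle is making the stitching in (b) fully rigorous: the constructed $\sigma$ is genuinely history-dependent (the current anchor must be carried in the state), and the finite-reset case requires the crucial observation that once $V_p$ is avoided forever the modified objective collapses to the original $\Parity$ condition, so that the winning verdict of $\sigma_v$ on that tail transfers verbatim to the original game.
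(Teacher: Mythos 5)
Your proposal is correct and follows essentially the same route as the paper: both directions of the Knaster--Tarski characterisation, with one inclusion obtained by showing $W^*$ is a (post-)fixed point via Eve's original winning strategy, and the other by showing any (post-)fixed point $Y$ is contained in $W^*$ via stitching modified-game strategies at every visit to a priority-$p$ vertex, then case-splitting on whether $p$ is visited finitely or infinitely often. The paper states exactly this dichotomy, only more tersely, so your extra detail on the anchor-resetting construction is a faithful expansion rather than a different argument. One small caution: the justification ``otherwise Adam would escape to his own winning region'' silently invokes determinacy (identifying $V \setminus W^*$ with Adam's winning region), whereas the paper explicitly arranges the whole section so as to \emph{derive} determinacy rather than assume it; the fact you need --- that $W^*$ is closed under Adam's moves at $\VA$-vertices --- is provable directly, since a successor $u' \notin W^*$ would let Adam refute any Eve strategy from $u$ by first moving to $u'$. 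Replacing that phrase with the direct argument makes the proof match the paper's determinacy-free stance.
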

\noindent 
In words (for the sake of explanation, we assume that $\Win = \Lose = \emptyset$): 
$W_E(\Parity)$ is the largest set of vertices $Y$ such that Eve has a strategy ensuring that
\begin{itemize}
	\item either the priority $p$ is never seen, in which case the parity objective is satisfied with lower priorities,
	\item or the priority $p$ is seen, in which case Eve can ensure to reach $Y$ in one step.
\end{itemize}

\begin{proof}
We let $W$ denote 
\[
W_E (\left( \Parity \cup \Reach(\Win) \right) \cap \Safe(\Lose) ).
\]

The fact that $W$ is included in the greatest fixed point follows from the fact that it is itself a fixed point,
which is easy to check.

To prove that $W$ contains the greatest fixed point, we observe that any fixed point $Y$ is contained in $W$.
Indeed, if $Y$ is a fixed point, the strategy described above ensures parity: either it visits finitely many times $p$,
and then from some point onwards the parity objective is satisfied with lower priorities, or it visits infinitely many times $p$,
and then the parity objective is satisfied because $p$ is maximal and even.
Note that this strategy is positional, as disjoint union of two positional strategies,
one for vertices of priorities less than $p$ and the other for $V_p \cap \Pre(W)$.
\end{proof}

\begin{algorithm}
 \KwData{A parity game with priorities in $[1,p]$ with $p$ even and $\Win,\Lose$ two additional colours}

$Y_{-1} \leftarrow V$ ;

$k \leftarrow 0$ ;
     
\Repeat{$Y_k = Y_{k-1}$}{
$\Win_k \leftarrow V_p \cap \Pre(Y_{k-1})$ ;

$\Lose_k \leftarrow V_p \cap \overline{\Pre(Y_{k-1})}$ ;

$Y_k = W_E\left( 
\begin{array}{c}
\Parity \cup \Reach(\Win \cup \Win_k) \\ 
\cap \\
\Safe(\Lose \cup \Lose_k) 
\end{array}
\right)$ ;

$k \leftarrow k + 1$ ;}

\Return{$Y_k$}
\caption{The recursive algorithm when the largest priority is even.}
\label{algo:fixpoint_even}
\end{algorithm}

Algorithm~\ref{algo:fixpoint_even} fleshes out the fixed point computation described in Lemma~\ref{lem:zielonka_even},
which shows that it outputs 
\[
Y_k = W_E (\left( \Parity \cup \Reach(\Win) \right) \cap \Safe(\Lose)),
\]
with $\Win_k = Y_k \cap V_p$ and $\Lose_k = (V \setminus Y_k) \cap V_p$.

For each $k$ the computation of $Y_k$ is a recursive call: in the new game, 
vertices with priorities $p$ are marked terminal, 
and declared winning if in $\Pre(Y_{k-1})$ (\textit{i.e.} color $\Win$), losing otherwise (color $\Lose$). 
So in this game the priorities are in $[1,p-1]$.

\begin{figure*}[!ht]
\centering
\begin{subfigure}{.5\textwidth}
  \centering
  \includegraphics[width=.8\linewidth]{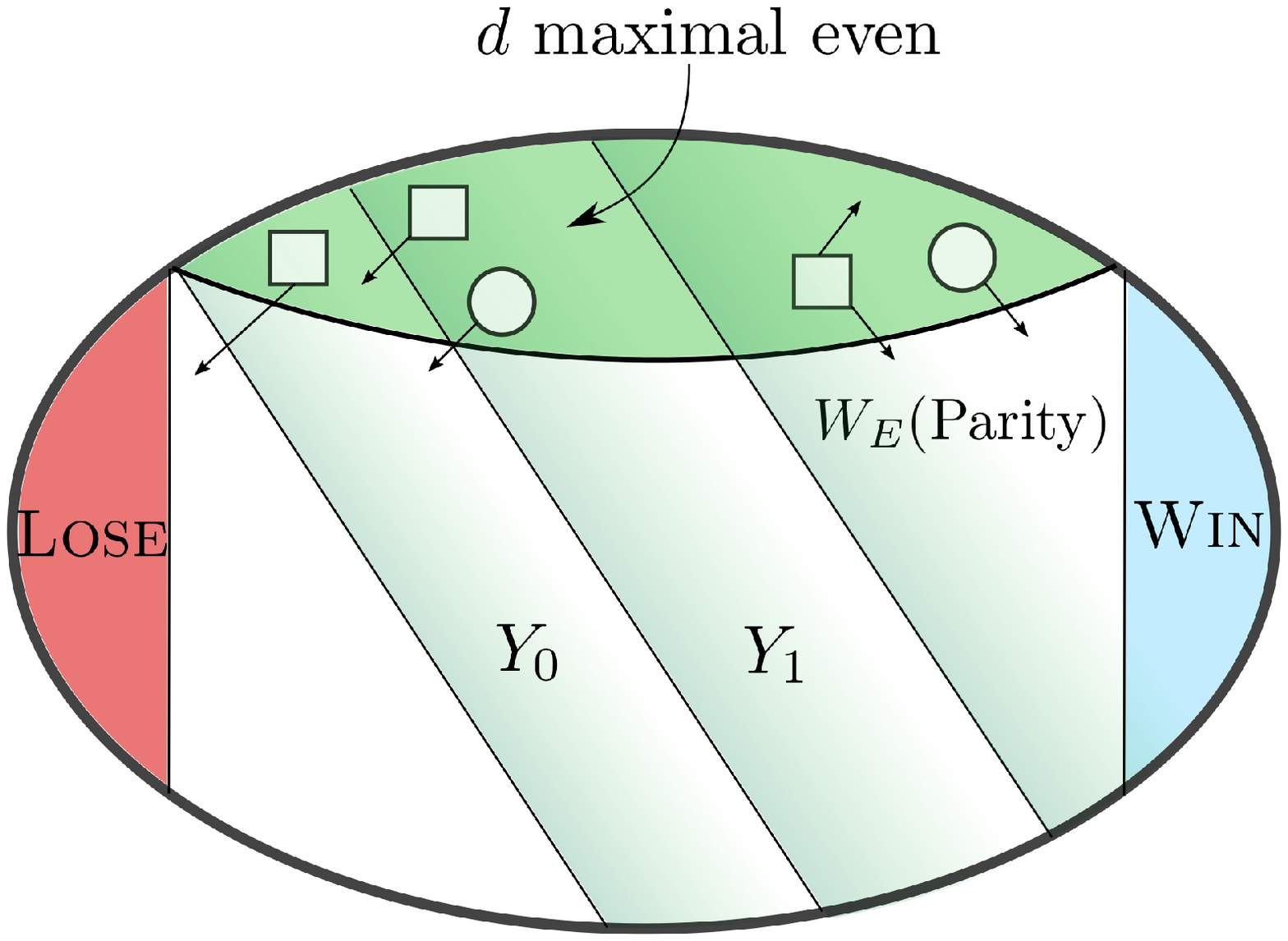}
  \label{fig:parity_even}
\end{subfigure}%
\begin{subfigure}{.5\textwidth}
  \centering
  \includegraphics[width=.8\linewidth]{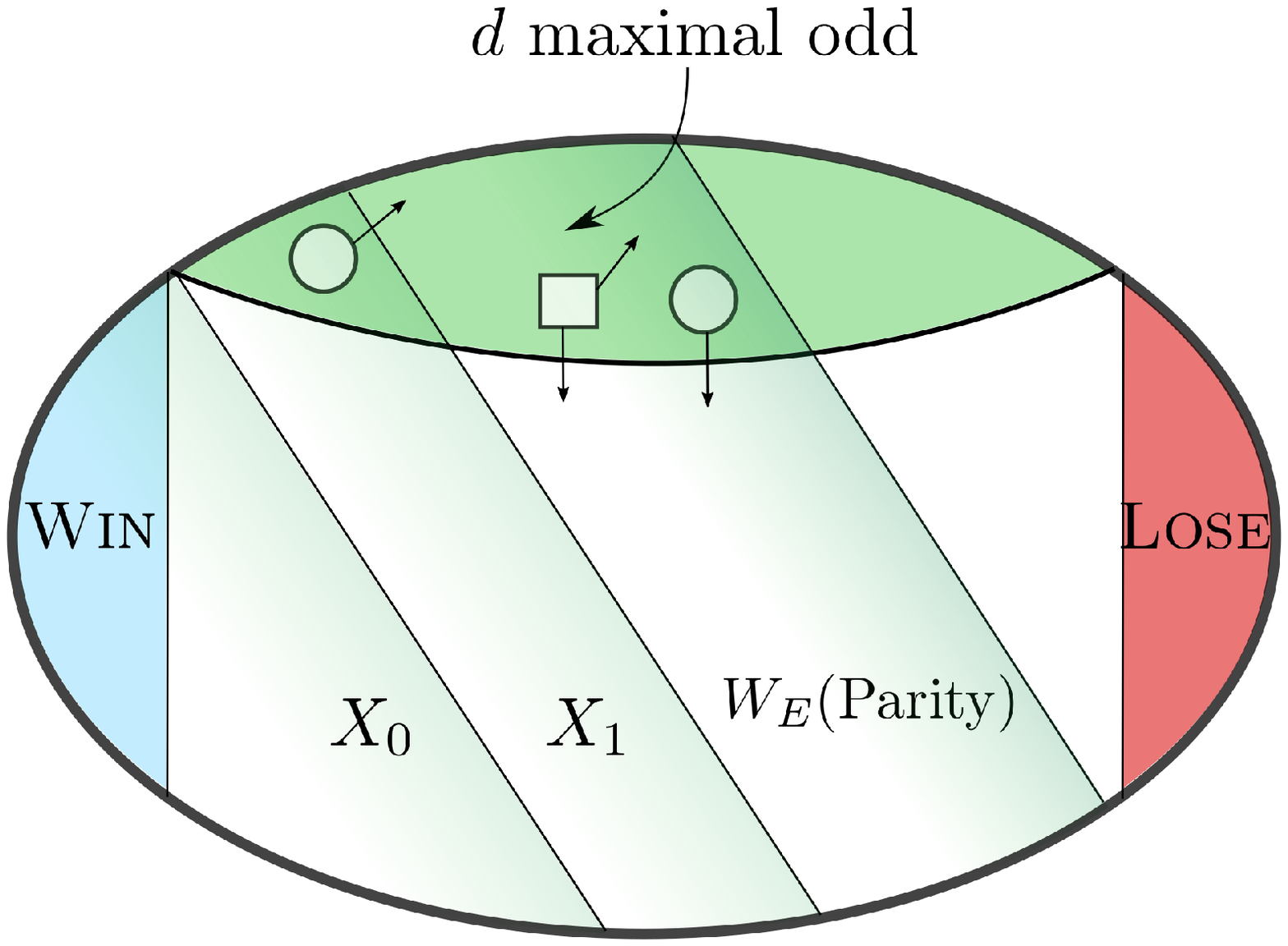}
  \label{fig:parity_odd}
\end{subfigure}
\caption{The two recursive procedures: even on the left and odd on the right.}
\label{fig:zielonka}
\end{figure*}

\subsubsection*{If the largest priority is odd}

\begin{remark}
At this point it is very tempting to say that the odd case is symmetric to the even case, swapping the role of the two players.
We do not take this road, because it requires assuming determinacy of parity games which we want to avoid in this presentation,
and obtain as a corollary.
It is also convenient to have the odd case spelled out for the construction of signatures.
\end{remark}

\begin{lemma}\label{lem:zielonka_odd}
Consider a parity game $\game$ with priorities in $[1,p]$ with $p > 1$ odd.

Then $W_E (\left( \Parity \cup \Reach(\Win) \right) \cap \Safe(\Lose) )$ is the least fixed point of the operator 
\[
X \mapsto W_E\left( 
\begin{array}{c} 
\Parity \cup \Reach \left[ \Win \cup (V_p \cap \Pre(X)) \right] \\
\cap \\
\Safe \left[ \Lose \cup (V_p \cap \overline{\Pre(X)}) \right] 
\end{array} \right).
\]
\end{lemma}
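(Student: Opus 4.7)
The plan is to mirror the proof of Lemma~\ref{lem:zielonka_even}, with greatest replaced by least and the roles of the players swapped in the underlying parity argument. I would proceed by induction on the number of priorities $p$, so that positional determinacy of games with priorities in $[1,p-1]$ (plus $\Win,\Lose$) is available as an inductive hypothesis. Write $W = \WE((\Parity \cup \Reach(\Win)) \cap \Safe(\Lose))$ and let $F$ denote the operator. Since enlarging $X$ enlarges $\Pre(X)$ and shrinks $\overline{\Pre(X)}$, the operator $F$ is monotone, so by Knaster--Tarski it admits a least fixed point $\mu F$, and the goal is to establish $W = \mu F$.

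First I would verify that $W$ is a fixed point of $F$, which already yields $\mu F \subseteq W$. The inclusion $W \subseteq F(W)$ is essentially as in the even case: a winning strategy $\sigma$ for Eve from $v \in W$ remains winning in the modified game, because a $\sigma$-consistent play either avoids $V_p$ entirely and then satisfies parity with priorities in $[1,p-1]$, or first reaches a $V_p$ vertex whose $\sigma$-successor lies in $W$, so that vertex sits in $V_p \cap \Pre(W)$ and the play reaches the Win terminal. For $F(W) \subseteq W$, the strategy witnessing $v \in F(W)$ can be concatenated with winning strategies from the $W$-vertices it reaches after visiting $V_p$, giving a winning strategy in the original game.

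The harder direction is to show that $W \subseteq Y$ for every fixed point $Y$, so that in particular $W \subseteq \mu F$. The equation $Y = F(Y)$ identifies $Y$ with Eve's winning set in the modified game, which uses only priorities in $[1,p-1]$; the inductive hypothesis then supplies a positional winning strategy $\tau$ for Adam from $V \setminus Y$ in that game. I would extend $\tau$ to the original arena by observing that any $V_p$ vertex visited by a play staying in $V \setminus Y$ must lie in $V_p \cap \overline{\Pre(Y)}$: Eve vertices there are forced into $V \setminus Y$, while Adam vertices offer him an edge into $V \setminus Y$. The resulting play stays in $V \setminus Y$ forever; if it visits $V_p$ infinitely often then the maximum priority appearing infinitely often is the odd $p$, and otherwise it eventually stays in $V \setminus V_p$, where $\tau$ ensures an odd maximum priority below $p$ appears infinitely often. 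Either way Adam wins the parity objective in the original game from $V \setminus Y$, so $W \cap (V \setminus Y) = \emptyset$. The main subtlety is precisely this extension step: $\tau$ is defined only on the modified arena where $V_p$ is terminal, so one has to check that the definition of the Lose set $V_p \cap \overline{\Pre(Y)}$ is exactly what allows Adam to escort the play back into $V \setminus Y$ and continue applying $\tau$.
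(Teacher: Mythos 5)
Your proof is correct and follows essentially the same route as the paper: show $W$ is a fixed point to get $\mu F \subseteq W$, then show that every fixed point $Y$ must contain $W$ by exhibiting a winning strategy for Adam from $V \setminus Y$ that repeatedly pushes the play back into $V \setminus Y$ after each visit to $V_p$, and then observing that Eve and Adam cannot both win from the same vertex. The one stylistic difference is that you make the inductive hypothesis (positional determinacy of games with priorities in $[1,p-1]$) explicit, whereas the paper leaves this implicit when it asserts ``from $V \setminus X$ Adam has a strategy ensuring\ldots''; making this dependence explicit is a small improvement in rigour, and is in line with the paper's own stated goal of not assuming determinacy of parity games but obtaining it as a by-product of the recursion.
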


\begin{proof}
We let $W$ denote 
\[
W_E (\left( \Parity \cup \Reach(\Win) \right) \cap \Safe(\Lose) ).
\]

The fact that $W$ contains the least fixed point follows from the fact that it is itself a fixed point,
which is easy to check.

To prove that $W$ is included in the least fixed point is the interesting and non-trivial bit.
It follows from the observation that any fixed point $X$ contains $W$.
We show that 
\[
V \setminus X \subseteq W_A (\left( \Parity \cup \Reach(\Win) \right) \cap \Safe(\Lose) ) \subseteq V \setminus W.
\]
Note that here we are not relying on the determinacy of parity games: the second inclusion is very simple and always true,
it only says that Eve and Adam cannot win from the same vertex.

Indeed, if $X$ is a fixed point, from $V \setminus X$ Adam has a strategy ensuring that
\begin{itemize}
	\item either the priority $p$ is never seen, in which case the parity objective is violated with lower priorities,
	\item or the priority $p$ is seen, in which case Adam can ensure to reach $V \setminus X$ in one step.
\end{itemize}
This strategy violates parity: either it visits finitely many times $p$,
and then from some point onwards the parity objective is violated with lower priorities, 
or it visits infinitely many times $p$, and then the parity objective is violated because $p$ is maximal and odd.
\end{proof}

\begin{algorithm}
 \KwData{A parity game with priorities in $[1,p]$ with $p > 1$ odd and $\Win,\Lose$ two additional colours}

$X_{-1} \leftarrow \emptyset$ ; 

$k \leftarrow 0$ ;
     
\Repeat{$X_k = X_{k-1}$}{
$\Win_k \leftarrow V_p \cap \Pre(X_{k-1})$ ;

$\Lose_k \leftarrow V_p \cap \overline{\Pre(X_{k-1})}$ ;

$X_k = W_E\left( 
\begin{array}{c}
\Parity \cup \Reach(\Win \cup \Win_k) \\ 
\cap \\
\Safe(\Lose \cup \Lose_k) 
\end{array}
\right)$ ;

$k \leftarrow k + 1$ ;}

\Return{$X_k$}
\caption{The recursive algorithm when the largest priority is odd.}
\label{algo:fixpoint_odd}
\end{algorithm}

The base case $p = 1$ is easily dealt with by computing $\WE(\Reach(\Win) \cap \Safe(\Lose))$.
Zielonka's algorithm alternates greatest and least fixed point computations, in total $d-1$ of them.
Each of them computes subsets of the vertices, hence stabilises within at most $n$ steps.
A careful analysis gives a time complexity bound of $O(m \cdot (n/d)^d)$~\cite{Jurdzinski00}.

\subsubsection*{The construction of signatures}

We now analyse the structural decomposition unearthed by Zielonka's algorithm.
We fix a parity game $\game$.
For an odd priority $p$, consider the the non-decreasing sequence of sets of vertices 
\[
X_0(p) \subseteq X_1(p) \subseteq X_2(p) \subseteq \cdots 
\]
computed by running the algorithm with inputs $\game$ and
\[
\Win = \WE(\Parity) \cap V_{\ge p}\ ;\ \Lose = \WA(\Parity) \cap V_{\ge p}.
\]
We define a function $\mu : V \to [0,n]^{d/2} \cup \set{\top}$ as follows:
for $p$ an odd priority, $\mu(p)(v)$ is the smallest $k$ such that $v$ is in $X_k(p)$,
and $\top$ if it does not belong to any of these sets.

\begin{lemma}
The function $\mu$ defined above is a signature
such that for all $v \in V$, we have $\mu(v) \neq \top$ if and only if Eve wins from $v$.
\end{lemma}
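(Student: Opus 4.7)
The statement bundles two claims: the characterisation $\mu(v) \ne \top \iff v \in \WE(\Parity)$, and the signature property. The direction $\mu(v) \ne \top \Rightarrow v \in \WE(\Parity)$ follows from Lemma~\ref{lem:signature_correctness} once we prove $\mu$ is a signature, so the two substantive tasks are the converse direction of the characterisation and the signature property.

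For the converse, I would show $\WE(\Parity) \subseteq X_\infty(p)$ for every odd $p$. Given $v \in \WE(\Parity)$, let $\sigma$ be a winning strategy for Eve from $v$ in $\game$. No $\sigma$-consistent play can visit $\WA(\Parity)$: otherwise, splicing $\sigma$ on the prefix with Adam's winning strategy from such a vertex would produce a $\sigma$-consistent play that loses, contradicting that $\sigma$ wins (the parity objective is tail-determined). Hence $\sigma$ avoids $\Lose = \WA(\Parity) \cap V_{\ge p}$ and still satisfies $\Parity$, so it wins $(\Parity \cup \Reach(\Win)) \cap \Safe(\Lose)$; by Lemma~\ref{lem:zielonka_odd}, $v \in X_\infty(p)$, i.e.\ $\mu(p)(v)$ is finite.

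For the signature property, fix $v$ of priority $p'$ with $\mu(v) \ne \top$ and focus on $v \in V_E$; the Adam case is dual. The paradigmatic sub-case is $p'$ odd with $k = \mu(p')(v) \ge 1$: at the step $k$ when $v$ enters $X_k(p') \setminus X_{k-1}(p')$ in the Zielonka iteration for $p = p'$, the structure of the algorithm forces $v \in \Win_k = V_{p'} \cap \Pre(X_{k-1}(p'))$, and unfolding $\Pre$ for Eve yields an edge $(v, v')$ with $v' \in X_{k-1}(p')$, hence $\mu(p')(v') \le k - 1 < \mu(p')(v)$. This is the strict decrease at coordinate $p'$ needed in the lexicographic comparison. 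The case $p'$ even (or the degenerate case $\mu(p')(v) = 0$) is handled by the recursive analysis inside Zielonka's even subprocedure (Lemma~\ref{lem:zielonka_even}), which generates Eve's strategy on vertices of lower priority and requires only the weaker inequality $\ge_{p'}$.

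The main obstacle is the lexicographic nature of $\mu(v) >_{p'} \mu(v')$: the successor producing strict decrease at coordinate $p'$ must not regress at any higher coordinate $q > p'$. The plan is to establish this by reverse induction on $p$ from $d$ down to $1$, maintaining the invariant that Zielonka's positional strategies at different priority levels are mutually compatible---Eve's strategy witnessing $v \in X_k(p')$ can be aligned with her strategies at higher $p$ so that the selected $v'$ satisfies $\mu(q)(v') \le \mu(q)(v)$ for all odd $q > p'$. Threading this compatibility through the nested fixed-point computations is the delicate part; once in place, the lexicographic signature inequality reduces to coordinate-wise verification.
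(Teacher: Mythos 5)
Your proof plan is essentially the paper's own argument: you reduce the signature property to the same two monotonicity facts about Zielonka's positional strategy---strict decrease at the coordinate of $v$'s own priority, and non-regression at every higher coordinate---and you derive the winning-region characterisation from the correctness of Zielonka's iteration (Lemmas~\ref{lem:zielonka_even} and~\ref{lem:zielonka_odd}), just as the paper does. The paper also states those two monotonicity facts as unproved ``observations'' rather than carrying out the cross-level compatibility argument in full, so your plan, which likewise defers the higher-coordinate threading, is at the same level of detail as the published proof.
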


\begin{proof}
We let $\sigma$ be the positional strategy constructed in the proof of Lemma~\ref{lem:zielonka_odd}.
Let $v \in V$ of priority $p$, we make two observations.
\begin{itemize}
	\item If $v \in X_k(p')$ with $p' > p$, the strategy $\sigma$ ensures to remain in $X_k(p')$ in the next step.
	\item If $p$ is odd and $v \in X_k(p)$, the strategy $\sigma$ ensures to reach $X_{k-1}(p)$ in the next step.
\end{itemize}
These two properties imply that $\mu$ is a signature.
The equivalence between $\mu(v) \neq \top$ and the fact that Eve wins from $v$ is a corollary of the correctness
of the algorithm given by Lemma~\ref{lem:zielonka_even} and Lemma~\ref{lem:zielonka_odd}.
\end{proof}



\section{A generic value iteration algorithm}
\label{sec:algorithm}

In this section, we define the notion of universal trees, and show how given a universal tree
one can construct a value iteration algorithm for parity games.
Both the small progress measure and the succinct progress measure algorithms are instances of this generic value iteration algorithm.

\subsection{Universal trees}
Let us fix two parameters $n$ and $h$.
The trees we consider have the following properties:
\begin{itemize}
	\item There are totally ordered, meaning that each node has a totally ordered set of children;
	\item They have a designated root and all leaves have depth exactly $h$.
\end{itemize}

We say that a tree embeds into another if the first one can be obtained by removing nodes from the second,
mapping root to root: in graph-theoretic terms, the first tree is a subgraph of the second.
We say that a tree $T$ is $(n,h)$-universal if all trees with at most $n$ leaves embed into $T$.
(Equivalently, it is enough to require that all trees with exactly $n$ leaves embed into $T$.)
An example of a $(n,h)$-universal tree is the complete tree of height $h$ with each node of degree $n$, 
it has $n^h$ leaves, as illustrated in Figure~\ref{fig:example_universal}.

\begin{figure*}[!ht]
\centering
\begin{subfigure}{.5\textwidth}
  \centering
  \includegraphics[width=.8\linewidth]{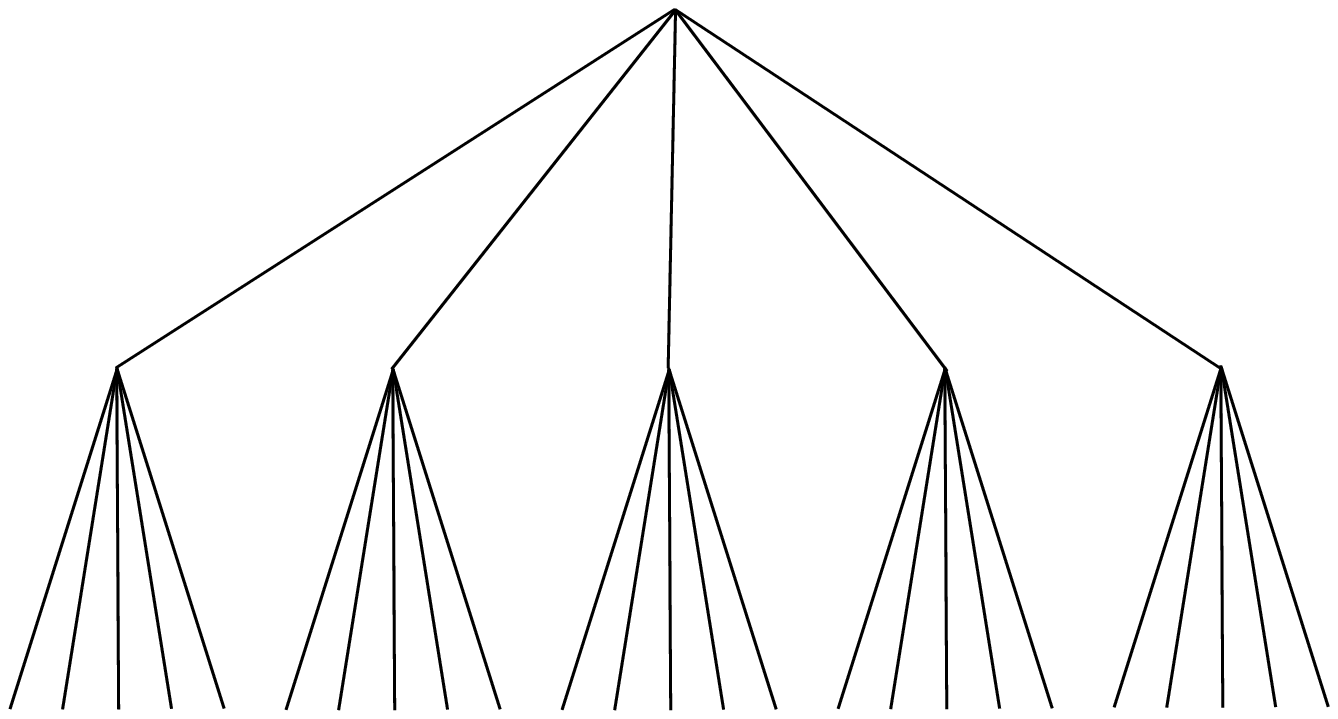}
  \label{fig:naive}
\end{subfigure}%
\begin{subfigure}{.5\textwidth}
  \centering
  \includegraphics[width=.8\linewidth]{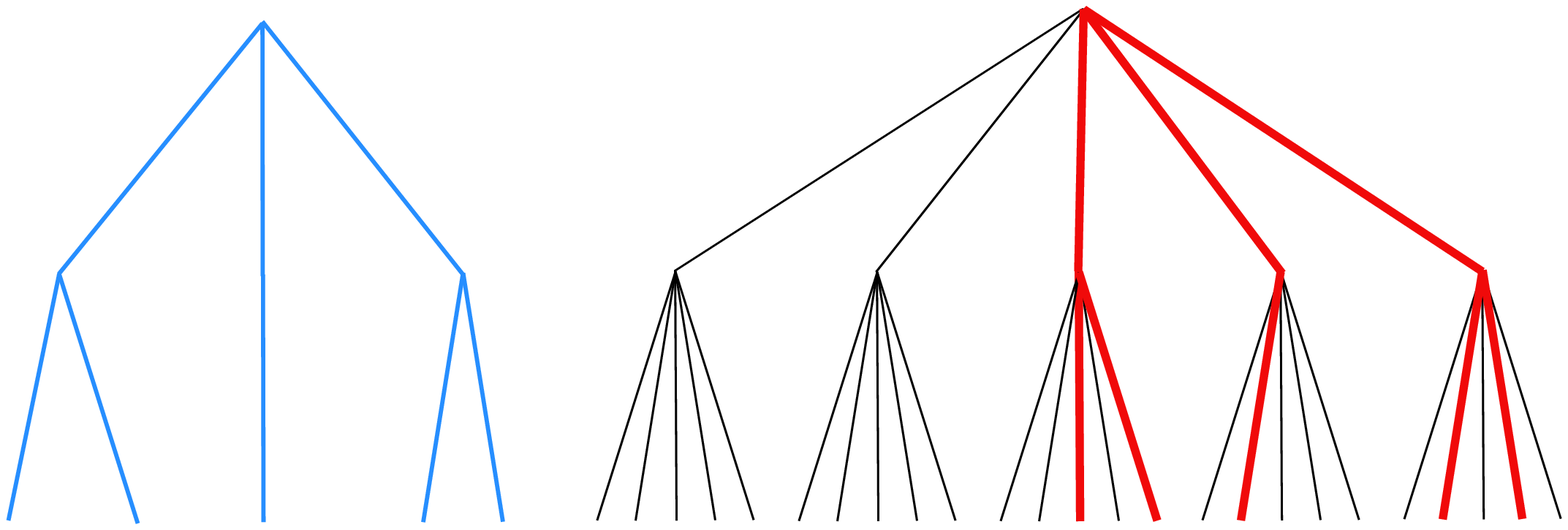}
  \label{fig:example_embedding}
\end{subfigure}
\caption{On the left, the naive $(5,2)$-universal tree with 25 leaves. 
On the right, a tree with $5$ leaves and height $2$, and one possible embedding into the naive universal tree.}
\label{fig:example_universal}
\end{figure*}

The size of a tree is the number of leaves it has.
We show in Figure~\ref{fig:tree_optimal} the smallest $(5,2)$-universal tree. It has $11$ leaves, which is less than the naive one ($25$ leaves).

\begin{figure}[!ht]
\centering
\includegraphics[scale=.4]{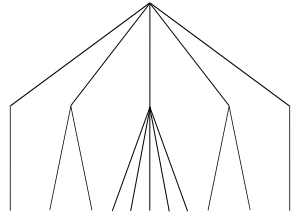}
\caption{The smallest $(5,2)$-universal tree has $11$ leaves.}
\label{fig:tree_optimal}
\end{figure}

\subsection{Signatures as trees}

The small progress measure algorithm casts the problem of constructing a signature as a least fixed point computation.
It assigns to each vertex a tuple in $[0,n]^{d/2}$ and updates the values of the vertices in order to satisfy the local constraints of signatures.
In other words the algorithm manipulates functions $\mu : V \to [0,n]^{d/2}$, which can equivalently seen as trees 
as illustrated in Figure~\ref{fig:example_signature}.
Each vertex is given by its path from the root, which has length $d/2$, and each direction is labeled by a number in $[0,n]$.
\begin{figure}[!ht]
\centering
\includegraphics[scale=.5]{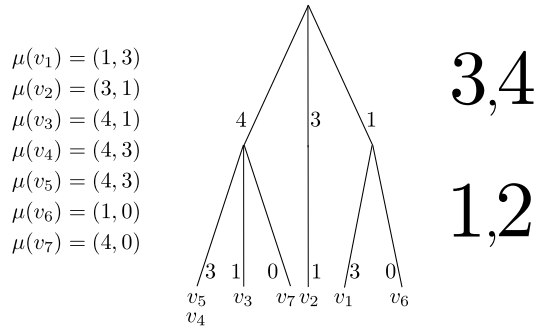}
\caption{A function $\mu : V \to [0,n]^{d/2}$ induces a tree. Here $n = 7$ and $d = 4$.}
\label{fig:example_signature}
\end{figure}

\vskip1em
The tree representation naturally induces the $p$-orders $\ge_p$. Indeed, indexing the levels from bottom to top by pairs of priorities
as in Figure~\ref{fig:example_signature},
whether $\mu(v) \ge_p \mu(v')$ can be read off the tree: it is equivalent to saying that the ancestor of $v$ at level $p$
is to the left of the ancestor of $v'$ at level $p$.
For instance in Figure~\ref{fig:example_signature} we have $\mu(v_3) >_1 \mu(v_7)$ but $\mu(v_3) =_3 \mu(v_7)$, and $\mu(v_2) >_2 \mu(v_1)$.

\vskip1em
Now, recall that the end goal of the algorithm is to construct a signature. 
A closer inspection at the definition of signatures reveals that the choice of values for the direction is immaterial:
the definition only uses the orders $\ge_p$. 
In other words, being a signature is a property of the underlying (totally ordered) tree,
and $[0,n]$ is just a total order among others.



\subsection{Existence of signatures for universal trees}


Theorem~\ref{thm:signature} considers functions $\mu : V \to [0,n]^{d/2} \cup \set{\top}$,
which as we explained can be seen as trees.
In the following theorem we fix a universal tree $T$ and we consider functions of the form
$\mu : V \to L(T) \cup \set{\top}$, where $L(T)$ is the set of leaves of $T$.

Such a function induces a set of orders on vertices called the $p$-orders: 
for $p \in [1,d]$ and $v,v'$ two vertices, we have $\mu(v) \ge_p \mu(v')$ if
the ancestor at level $p$ of $v$ is to the left of the ancestor at the same level of $v'$ 
(where levels are indexed as in Figure~\ref{fig:example_signature}).
The element $\top$ is the largest element for all $p$-orders $\ge_p$.

We extend the definition of signatures to functions $\mu : V \to L(T) \cup \set{\top}$,
using the exact same two properties which only depend upon the $p$-orders $\ge_p$.
Lemma~\ref{lem:signature_correctness} extends \textit{mutatis mutandis}:
indeed, the proof does not depend upon the choice of the underlying universal tree
but only on the $p$-orders.



\begin{theorem}\label{thm:signature_general}
For all parity games with $n$ vertices and $d$ priorities, 
for all $(n,d/2)$-universal tree $T$,
there exists a signature $\mu : V \to L(T) \cup \set{\top}$
such that for all $v \in V$, we have $\mu(v) \neq \top$ if and only if Eve wins from $v$.
\end{theorem}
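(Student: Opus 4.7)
The plan is to derive Theorem~\ref{thm:signature_general} from Theorem~\ref{thm:signature} together with the universality of $T$, without redoing any game-theoretic analysis. Theorem~\ref{thm:signature} already yields a signature $\nu : V \to [0,n]^{d/2} \cup \set{\top}$ with $\nu(v) \neq \top$ exactly on Eve's winning region. As the discussion preceding the theorem emphasises, the defining conditions of a signature depend only on the $p$-orders, which in turn depend only on the underlying totally ordered tree. So the natural strategy is to ``transport'' the values of $\nu$ along a tree embedding of the tree spanned by $\nu$ into the given universal tree $T$.

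Concretely, I would proceed in three steps. First, view $[0,n]^{d/2}$ as the set of leaves of the naive ordered tree of height $d/2$ with $n+1$ children at each node, and let $T_\nu$ be the smallest subtree of this naive tree that contains every leaf in $\nu(V) \setminus \set{\top}$; since $|\nu(V) \setminus \set{\top}| \le n$, the tree $T_\nu$ has at most $n$ leaves. Second, invoke $(n,d/2)$-universality to obtain an embedding $\phi : T_\nu \hookrightarrow T$, that is, an injective, root-preserving, parent-preserving, sibling-order-preserving map sending leaves to leaves. Third, define $\mu : V \to L(T) \cup \set{\top}$ by $\mu(v) = \phi(\nu(v))$ when $\nu(v) \neq \top$ and $\mu(v) = \top$ otherwise.

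It remains to check that $\mu$ is a signature. The key observation is that $\phi$ preserves all $p$-orders: the ancestor at level $p$ of $\phi(\ell)$ is $\phi$ applied to the ancestor at level $p$ of $\ell$, and $\phi$ respects the left-to-right order of siblings, so $\ell \ge_p \ell'$ if and only if $\phi(\ell) \ge_p \phi(\ell')$, with the analogous statement for strict inequality. Since the two defining clauses of a signature only mention the $p$-orders, and since $\mu(v) = \top$ precisely when $\nu(v) = \top$, the signature property transfers verbatim from $\nu$ to $\mu$, and $\mu(v) \neq \top$ iff Eve wins from $v$ by the corresponding property of $\nu$.

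The only mildly subtle point, and therefore the place to be careful, is to pin down precisely what a tree embedding is and to check that ``ancestor at level $p$'' commutes with $\phi$ and that the sibling order on the images of two nodes agrees with the sibling order on the nodes themselves. Once this is unpacked, the argument is purely combinatorial and invokes no new reasoning about parity objectives beyond Theorem~\ref{thm:signature}.
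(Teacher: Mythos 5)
Your proof is correct and follows essentially the same route as the paper's: apply Theorem~\ref{thm:signature} to obtain a signature valued in $[0,n]^{d/2}$, observe that its image spans a tree with at most $n$ leaves, embed that tree into $T$ by universality, and note that the signature conditions only reference the $p$-orders, which the embedding preserves. The paper states this more tersely, but the content is identical; your explicit unpacking of what a tree embedding is and why it commutes with ``ancestor at level $p$'' is exactly the point the paper leaves implicit.
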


\begin{proof}
Let $T$ be a $(n,d/2)$-universal tree $T$ and $\game$ be a parity game with $n$ vertices and $d$ priorities.
Thanks to Theorem~\ref{thm:signature} there exists a signature $\mu : V \to [0,n]^{d/2} \cup \set{\top}$
such that for all $v \in V$, we have $\mu(v) \neq \top$ if and only if Eve wins from $v$.
As explained in Figure~\ref{fig:example_signature} this induces a tree $t$ with at most $n$ leaves.
The crucial property is that $\mu$ and $t$ induce the same $p$-orders $\ge_p$ for all $p \in [1,d]$.
Since $T$ is $(n,d/2)$-universal, the tree $t$ embeds into $T$. 
Now, this induces a signature 
\[
\mu : V \to L(T) \cup \set{\top},
\]
since the definition of signatures only depends on the $p$-orders $\ge_p$.
\end{proof}

\subsection{The generic value iteration algorithm}
We construct a value iteration algorithm parameterised by the choice of a universal tree.
We fix $n$ the number of vertices and $d$ the number of priorities,
and a $(n,d/2)$-universal tree~$T$.
We let $\ell_{\min}$ denote the smallest leaf with respect to $\ge$, \textit{i.e.} the rightmost leaf of~$T$.
For $v \in V$ of priority $p$, define $\Lift_v(\mu) \in L(T)$ to be:
\begin{itemize}
	\item If $v \in \VE$, the smallest leaf $\ell$ with respect to $\ge_p$ such that there exists $(v,v') \in E$
	and $\ell \ge_p \mu(v')$, with a strict inequality if $p$ is odd;
	\item If $v \in \VA$, the smallest leaf $\ell$ with respect to $\ge_p$ such that for all $(v,v') \in E$,
	we have $\ell \ge_p \mu(v')$, with a strict inequality if $p$ is odd.
\end{itemize}
The definition of $\mu$ being a signature naturally reformulates in: for all $v \in V$, we have $\mu(v) = \Lift_v(\mu)$.

\begin{algorithm}[!ht]
 \KwData{A parity game with $n$ vertices and $d$ priorities.}

\For{$v \in V$}{
$\mu(v) \leftarrow \ell_{\min}$ ;
}
     
\Repeat{$\exists v \in V,\ \mu(v) \neq \Lift_v(\mu)$}{
Choose $v \in V$ such that $\mu(v) \neq \Lift_v(\mu)$ ;

$\mu(v) \leftarrow \Lift_v(\mu)$ ;}

\Return{$\mu$}
\caption{The generic value iteration algorithm.}
\label{algo:value_iteration}
\end{algorithm}

The algorithm is given in Algorithm~\ref{algo:value_iteration}, and its correctness follows from a lemma we present now,
an exact replica of Theorem 5 in~\cite{JL17}.
The operator $\Lift_v$ is extended to functions $V \to L(T) \cup \set{\top}$, updating the value of $v$
and leaving the other values unchanged.

\begin{lemma}
The set of all functions $V \to L(T) \cup \set{\top}$ is equipped with the pointwise order induced from $L(T) \cup \set{\top}$,
defining a finite complete lattice.
For all $v \in V$, the operator $\Lift_v$ is inflationary and monotone.
\end{lemma}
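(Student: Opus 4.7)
The plan is to establish three separate claims: that $V \to L(T) \cup \set{\top}$ is a finite complete lattice under pointwise order, that $\Lift_v$ is monotone, and that it is inflationary. For the lattice part, first note that $L(T) \cup \set{\top}$ is itself a finite chain: the leaves of $T$ are totally ordered left-to-right (an order already implicitly used to define $\ell_{\min}$), and $\top$ is added as a maximum. A finite chain is a complete lattice with sups and infs given by maxima and minima; the pointwise order on $V \to L(T) \cup \set{\top}$ makes this the finite product $(L(T) \cup \set{\top})^V$, which is a finite complete lattice with joins and meets taken coordinatewise. This part is purely structural.

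The key technical ingredient for both remaining claims is the observation that the total order $\ge$ on $L(T) \cup \set{\top}$ refines each $p$-order: if $\ell \ge \ell'$ then $\ell \ge_p \ell'$, and $\top$ is maximal in every $\ge_p$. This is because $\ge_p$ is exactly the quotient of $\ge$ that identifies two leaves sharing an ancestor at level $p$, so the relative order of non-$p$-equivalent pairs is preserved. From this, monotonicity follows by inspection: $\Lift_v$ leaves all coordinates except $v$ unchanged, so it suffices to compare $\Lift_v(\mu)(v)$ and $\Lift_v(\mu')(v)$ when $\mu \le \mu'$. For $v \in \VE$ of priority $p$, $\Lift_v(\mu)(v)$ is a minimum over outgoing edges $(v,v')$ of a per-edge target that is monotone in $\mu(v')$; each per-edge target grows when $\mu$ is replaced by $\mu'$, and so does their minimum. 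For $v \in \VA$ the same definition is a $\ge_p$-supremum over outgoing $v'$, which is monotone in the same way. The strictness required when $p$ is odd is a uniform upward shift and does not interfere.

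The main obstacle is the inflationary property, because a literal reading of the definition does not force $\Lift_v(\mu)(v) \ge \mu(v)$: one can construct $\mu$ with $\mu(v) = \top$ and low outgoing values, whose naive lift would drop below $\top$. The intended reading --- the one that makes the reformulation ``$\mu = \Lift_v(\mu) \iff \mu$ is a signature'' accurate and the value iteration of Algorithm~\ref{algo:value_iteration} monotonically raise values --- is that $\Lift_v(\mu)(v)$ is the smallest leaf \emph{at least as large as $\mu(v)$} satisfying the local signature condition at $v$, equivalently the join of $\mu(v)$ with the naively computed target. I would therefore open the proof by pinning this convention down explicitly; under it, inflation is immediate and the lemma is proved.
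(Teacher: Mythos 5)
The paper does not actually prove this lemma: it cites it as ``an exact replica of Theorem~5 in~[JL17]'' and moves on. Your proposal supplies a self-contained argument where the paper has none, so there is no internal proof to compare against; what follows is an assessment on its own merits.

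The lattice part and the monotonicity part are both correct, and the right way to organise the argument. The observation that the total left-to-right order $\ge$ on $L(T)\cup\set{\top}$ refines every $p$-order $\ge_p$ is indeed the only nontrivial fact needed, since it makes the per-edge ``smallest valid leaf'' map monotone in $\mu(v')$; composing with $\min$ over edges (for $\VE$) or $\max$ over edges (for $\VA$) preserves monotonicity, and $\Lift_v$ is the identity on every other coordinate.

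You have also put your finger on a genuine imprecision in the paper's definition of $\Lift_v$. As literally written, $\Lift_v(\mu)(v)$ is the least leaf satisfying the local constraint, with no reference to the current value $\mu(v)$; that operator is \emph{not} inflationary (take $v\in\VE$ with $\mu(v)=\top$ and a successor with a small value), and the reformulation ``$\mu$ is a signature iff $\mu=\Lift_v(\mu)$ for all $v$'' would also fail in one direction. The intended definition --- and the one in Jurdzi\'nski's small progress measure paper and in Jurdzi\'nski--Lazi\'c --- takes the join with $\mu(v)$, i.e.\ $\Lift_v(\mu)(v)=\max\bigl(\mu(v),\,\text{least valid leaf}\bigr)$. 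Under that convention inflation is immediate, monotonicity still holds (a join of monotone maps is monotone), and the fixed-point characterisation of signatures is restored. Making this convention explicit, as you propose, is exactly the right move. One further small repair in the same spirit, worth noting in your write-up: the paper declares $\Lift_v(\mu)\in L(T)$, but when no leaf satisfies the constraint (e.g.\ some relevant $\mu(v')=\top$, or $p$ odd and $\mu(v')$ is already the rightmost leaf) the value must be $\top$, so the codomain should be $L(T)\cup\set{\top}$ --- consistent with the lattice you set up.

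Overall: correct, and a useful tightening of the paper's statement of the operator rather than a mere reproof.
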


As explained in~\cite{JL17}, it follows from the lemma that from every $\mu : V \to L(T) \cup \set{\top}$, 
every sequence of applications of the operators $\Lift$ eventually
reaches the least simultaneous fixed point of all the operators $\Lift$ that is greater than or equal to $\mu$.
Hence we obtain the correctness of the generic value iteration algorithm.

\begin{theorem}
For all $n, d \in \N$ with $d$ even, for all $(n, d/2)$-universal tree $T$, 
for all parity games with $n$ vertices, $m$ edges, and $d$ priorities,
the value iteration algorithm over the tree $T$ outputs a signature $\mu$
such that for all $v \in V$, we have $\mu(v) \neq \top$ if and only if Eve wins from $v$.

Furthermore, the algorithm runs in time $O(m \log(n) \log(d) \cdot |T|)$, where $|T|$ is the number of leaves of $T$.
\end{theorem}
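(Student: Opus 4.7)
The plan is to prove correctness first and the running-time bound second. For correctness, I view the algorithm as a fixed-point iteration. By the preceding lemma, each operator $\Lift_v$ is inflationary and monotone on the finite complete lattice of functions $V \to L(T) \cup \set{\top}$ ordered pointwise. A standard Knaster--Tarski-style argument (equivalently the Kleene iteration analysis sketched in~\cite{JL17}) then shows that starting from the minimum function $\mu_0 \equiv \ell_{\min}$, any fair sequence of updates terminates at the least simultaneous fixed point $\mu^\dagger$ of all the operators $\Lift_v$ satisfying $\mu^\dagger \geq \mu_0$. Unpacking the definition, such a simultaneous fixed point is exactly a signature $\mu : V \to L(T) \cup \set{\top}$. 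The forward implication, $\mu^\dagger(v) \neq \top$ implies Eve wins from $v$, is then just the extension of Lemma~\ref{lem:signature_correctness} to $L(T)$, whose proof transfers verbatim since it depends only on the $p$-orders.

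For the converse, I would invoke Theorem~\ref{thm:signature_general} to produce a signature $\mu^\star : V \to L(T) \cup \set{\top}$ with $\mu^\star(v) \neq \top$ iff Eve wins from $v$. Trivially $\mu^\star \geq \mu_0$ pointwise, and $\mu^\star$ is itself a simultaneous fixed point of the $\Lift_v$, so minimality of $\mu^\dagger$ forces $\mu^\dagger \leq \mu^\star$ pointwise. Consequently $\mu^\star(v) \neq \top$ entails $\mu^\dagger(v) \neq \top$, completing the characterisation. I regard this as the conceptual heart of the proof: termination alone gives a signature, but without Theorem~\ref{thm:signature_general} one cannot rule out the pathological scenario where the iteration stabilises at $\top$ on some vertex from which Eve actually wins. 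Everything else follows routinely.

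For the complexity, I would use the standard worklist implementation: maintain a queue of vertices $v$ with $\mu(v) \neq \Lift_v(\mu)$, and whenever $\mu(v)$ is raised, enqueue all in-neighbours of $v$. Each $\mu(v)$ strictly increases at every update in the total order on $L(T) \cup \set{\top}$, so each vertex is updated at most $|T|$ times, and each edge $(u,v)$ triggers a recomputation of $\Lift_u$ at most $|T|$ times, for a total of $O(m \cdot |T|)$ edge inspections. The only remaining ingredient is the per-operation cost in $L(T)$: a single comparison or ancestor-lookup among leaves of a universal tree of height $d/2$ can be carried out in $O(\log(n)\log(d))$ time using a balanced encoding of the leaves, exactly as in~\cite{JL17}. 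Multiplying the two factors yields the announced $O(m \log(n) \log(d) \cdot |T|)$ bound, and the main obstacle — the converse direction of correctness — is precisely what makes the appeal to Theorem~\ref{thm:signature_general} indispensable.
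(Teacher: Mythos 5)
Your proof is correct and follows the same approach as the paper, which defers the termination-and-least-fixed-point argument to the preceding lemma and to~\cite{JL17} and does not spell out the two directions of the correctness claim. You fill in those details faithfully, and in particular you correctly identify Theorem~\ref{thm:signature_general} as the ingredient that rules out the iteration stabilising at $\top$ on a vertex from which Eve wins — exactly the role the paper implicitly assigns to it by proving that theorem immediately beforehand.
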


\subsubsection*{Complexity analysis}

The value iteration algorithm given above lifts a vertex $v$ at most $|T|$ many times, hence the total number of lifts is at most~$n |T|$.
This bound cannot be much improved: for instance a vertex of priority $1$ with a self-loop is evidently losing but the algorithm
will lift the vertex $|T|$ times to get this information.
Computing a lift for $v \in V$ can be performed in time $O(\text{deg}(v) \log(n) \log(d))$.
It follows that the complexity of the algorithm is proportional to the size of the underlying universal tree.

\subsubsection*{Two instances of the generic algorithm}

The small progress measure is an instance of the generic value iteration algorithm,
using the naive universal tree of size $n^h$ hence giving a running time in $n^{d/2 + O(1)}$.

\vskip1em
The succinct progress measure is an instance of the generic value iteration algorithm 
using a universal tree they construct in~\cite{JL17}. 
Indeed Lemma 1 in their paper exactly says that the (implicit) tree they construct is universal, 
by (inductively) constructing embeddings.
Their universal tree has quasipolynomial size (we elaborate on its construction in the next section),
hence the running time of the succinct progress measure algorithm is $n^{O(\log(d))}$.

We note that they additionally show that for their universal tree lifts can be performed in nearly linear space,
implying that the overall space complexity is nearly linear. 
This result is specific to the universal tree they construct and does not hold in general.


\section{Bounds on universal trees}
\label{sec:bounds}

We saw in the previous section that constructing universal trees gives value iteration algorithms for parity games,
and that the smaller the universal tree the better the time complexity.
We prove in this section upper and lower bounds on the size of universal trees.

\subsection{The (streamlined) succinct universal tree of Jurdzi{\'n}ski and Lazi{\'c}}

We present an inductive construction for succinct universal trees.
It is essentially the same as the construction of Jurdzi{\'n}ski and Lazi{\'c} in~\cite{JL17},
but the framework of universal trees allows us to avoid some rounding in the original construction,
hence a marginal improvement.

\begin{theorem}
There exists a $(n,h)$-universal tree with $f(n,h)$ leaves, where $f$ satisfies the following:
$$\begin{array}{lll}
f(n,h) & = & f(n,h-1) + f(\lfloor n/2 \rfloor,h) + f(n - 1 - \lfloor n/2 \rfloor,h), \\
f(n,1) & = & n, \\
f(1,h) & = & 1.
\end{array}$$
An upper bound is given by
\[
f(n,h) \le 2^{\lceil \log(n) \rceil} \binom{\lceil \log(n) \rceil + h - 1}{\lceil \log(n) \rceil}.
\]
\end{theorem}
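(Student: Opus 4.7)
The plan is to treat the existence statement and the closed-form bound separately. For the construction I proceed by double induction on $n$ and $h$. At the inductive step, for $n \ge 2$ and $h \ge 2$, I assemble a $(\lfloor n/2 \rfloor, h)$-universal tree $T_L$, an $(n, h-1)$-universal tree $T_M$, and an $(n-1-\lfloor n/2 \rfloor, h)$-universal tree $T_R$ obtained inductively, and form $T_{n,h}$ whose root has as children, in left-to-right order, the children of the root of $T_L$ (with their subtrees), the root of $T_M$ (with its subtree), and the children of the root of $T_R$ (with their subtrees). The base cases are $T_{n,1}$ a root with $n$ leaf-children and $T_{1,h}$ a path of length $h$. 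A direct check shows that all leaves land at depth $h$, and the leaf count matches the stated recurrence.

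The core lemma is universality, which I would prove by strong induction on $n + h$. Let $t$ be a tree of height $h$ with $n' \le n$ leaves. If $n' \le \lfloor n/2 \rfloor$, then $t$ embeds into $T_L$ by induction, and identifying the root of $T_L$ with the root of $T_{n,h}$ lifts this to an embedding into $T_{n,h}$. Otherwise, if the root of $t$ has a single child $c$, the subtree rooted at $c$ has height $h-1$ and at most $n$ leaves, embeds into $T_M$ by induction, and lifts by sending $c$ to the root of $T_M$. In the remaining case the root of $t$ has children $c_1, \ldots, c_r$ with $r \ge 2$, and I pick the smallest index $j$ with $\sum_{i \le j} \ell_i > \lfloor n/2 \rfloor$, where $\ell_i$ counts the leaves in the subtree rooted at $c_i$. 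Minimality then guarantees that the subtrees strictly left of $c_j$ contain at most $\lfloor n/2 \rfloor$ leaves and those strictly right at most $n-1-\lfloor n/2 \rfloor$ leaves, while the subtree at $c_j$ has height $h-1$; three inductive embeddings into $T_L$, $T_M$, $T_R$ can then be aligned with the three blocks of children at the root of $T_{n,h}$.

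For the closed-form bound I set $k = \lceil \log n \rceil$ and induct on $h + k$ to prove $f(n,h) \le 2^k \binom{k+h-1}{k}$. The base cases $n = 1$ (so $k = 0$) and $h = 1$ are immediate. For the inductive step, $f(n, h-1)$ is bounded directly by $2^k \binom{k+h-2}{k}$, while the two side terms are each bounded by $2^{k-1} \binom{k+h-2}{k-1}$, using $\lfloor n/2 \rfloor \le 2^{k-1}$, $n - 1 - \lfloor n/2 \rfloor \le 2^{k-1}$, and the monotonicity of $2^{k'} \binom{k'+h-1}{k'}$ in $k'$ to absorb the cases where $\lceil \log \cdot \rceil$ is strictly less than $k-1$. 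Summing and invoking Pascal's identity $\binom{k+h-2}{k} + \binom{k+h-2}{k-1} = \binom{k+h-1}{k}$ yields the stated bound.

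The main difficulty is the universality step. The generic median-leaf split has to be complemented by the degenerate cases of few-leaf input, single-child root, and empty left or right part when $j = 1$ or $j = r$, and one must verify that the identification of the roots of $T_L$ and $T_R$ with the root of $T_{n,h}$ makes the three partial embeddings compose cleanly into a single order-preserving map.
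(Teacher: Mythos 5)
Your construction of $T_{n,h}$ and your universality argument follow the same decomposition as the paper: split at the first child $v_k$ whose left-cumulative leaf count crosses $\lfloor n/2 \rfloor$, send the subtree at $v_k$ to $T_M$, and send the two flanking blocks to $T_L$ and $T_R$. The paper streamlines this by taking $t$ to have exactly $n$ leaves (invoking the stated equivalence) so that the cutting index $k$ always exists, whereas you handle $n' \le \lfloor n/2 \rfloor$ and the single-child root as separate cases; these extra cases are harmless (the single-child case is in fact subsumed by the others) and your version is correct.

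Where you genuinely depart from the paper is in extracting the closed-form bound. The paper substitutes $F(p,h) = f(2^p,h)$, bounds it by $\overline F$ defined by the same recurrence turned into an equality, forms the bivariate generating function $\F(x,y) = y/(1-2x-y)$, and reads off $\overline F(p,h) = 2^p\binom{p+h-1}{p}$. You instead prove $f(n,h) \le 2^k \binom{k+h-1}{k}$ with $k = \lceil\log n\rceil$ directly by induction on $h+k$, using $\lceil\log\lfloor n/2\rfloor\rceil \le k-1$, $\lceil\log(n-1-\lfloor n/2\rfloor)\rceil \le k-1$, monotonicity of $k' \mapsto 2^{k'}\binom{k'+h-1}{k'}$, and Pascal's rule $\binom{k+h-2}{k}+\binom{k+h-2}{k-1}=\binom{k+h-1}{k}$. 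This checks out and is arguably more elementary; the generating-function route has the advantage of mechanically producing the exact solution of the auxiliary recurrence rather than requiring one to guess the closed form and verify it, but the end result is the same bound.
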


\begin{proof}
To construct the $(n,h)$-universal tree $T$, let:
\begin{itemize}
	\item $T_\text{left}$ be a $(\lfloor n/2 \rfloor,h)$-universal tree;
	\item $T_\text{middle}$ be a $(n,h-1)$-universal tree;
	\item $T_\text{right}$ be a $(n - 1 - \lfloor n/2 \rfloor,h)$-universal tree.
\end{itemize}
We construct $T$ as in Figure~\ref{fig:smallest_tree_construction}.
More precisely, the children of the root is $T$ are, in order: the children of $T_\text{left}$, then
the root of $T_\text{middle}$, and then the children of $T_\text{right}$.

\begin{figure}[!ht]
\centering
\includegraphics[scale=.5]{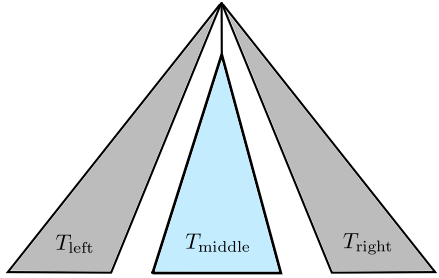}
\caption{The inductive construction.}
\label{fig:smallest_tree_construction}
\end{figure}

\vskip1em
We argue that $T$ is $(n,h)$-universal.
Consider a tree $t$ with $n$ leaves.
The question is where to cut in the middle, \textit{i.e.} which child of the root of $t$ gets mapped to the root of $T_\text{middle}$.
Let $v_1,\ldots,v_m$ be the children of the root of $t$, and let $n(v_i)$ be the number of leaves below $v_i$. 
Since $t$ has $n$ leaves, we have $n(v_1) + \cdots + n(v_m) = n$.
There exists a unique $k$ such that 
\[
\begin{array}{l}
n(v_1) + \cdots + n(v_{k-1}) \le \lfloor n/2 \rfloor, \text{ and } \\
n(v_1) + \cdots + n(v_k) > \lfloor n/2 \rfloor.
\end{array}
\]
For this choice of $k$ we have
\[
n(v_{k+1}) + \cdots + n(v_m) \le n - 1 - \lfloor n/2 \rfloor.
\]
To embed $t$ into $T$, we proceed as follows:
\begin{itemize}
	\item the tree rooted in $v_p$ has height $h-1$ and at most $n$ leaves, so in embeds into $T_\text{middle}$;
	\item the tree obtaining by restricting $t$ to all nodes to the left of $v_k$ has $\lfloor n/2 \rfloor$,
	so it embeds into $T_\text{left}$ by induction hypothesis;
	\item the tree obtaining by restricting $t$ to all nodes to the right of $v_k$ has $n - 1 - \lfloor n/2 \rfloor$,
	so it embeds into $T_\text{right}$ by induction hypothesis.
\end{itemize}
\end{proof}

\subsubsection*{Analysis of the function $f$}

Define $F(p,h) = f(2^p,h)$ for $p \ge 0$ and $h \ge 1$.
Then we have
$$\begin{array}{lll}
F(p,h) & \le & F(p,h-1) + 2 F(p-1,h), \\
F(p,1) & = & 2^p, \\
F(0,h) & = & 1.
\end{array}$$
To obtain an upper bound on $F$ we define $\overline{F}$ by
$$\begin{array}{lll}
\overline{F}(p,h) & = & \overline{F}(p,h-1) + 2 \overline{F}(p-1,h), \\
\overline{F}(p,1) & = & 2^p, \\
\overline{F}(0,h) & = & 1,
\end{array}$$
so that $F(p,h) \le \overline{F}(p,h)$.
Define the bivariate generating function 
\[
\F(x,y) = \sum_{p \ge 0, h \ge 1} \overline{F}(p,h) x^p y^h.
\]
Plugging the inductive equalities we obtain
\[
\F(x,y) = \frac{y}{1 - 2x - y},
\]
from which we extract that $\overline{F}(p,h) = 2^p \binom{p+h-1}{p}$, implying $F(p,h) \le 2^p \binom{p+h-1}{p}$.
Putting everything together we obtain
\[
f(n,h) \le 2^{\lceil \log(n) \rceil} \binom{\lceil \log(n) \rceil + h - 1}{\lceil \log(n) \rceil}.
\]
Note that this is very close and marginally better than the bound obtained in~\cite{JL17},
which is $2^{\lceil \log(n) \rceil} \binom{\lceil \log(n) \rceil + h + 1}{\lceil \log(n) \rceil}$.

\begin{corollary}
There exists an algorithm solving parity games in time 
\[
O\left(m n \log(n) \log(d) \cdot \binom{\lceil \log(n) \rceil + d/2 - 1}{\lceil \log(n) \rceil}\right).
\]
\end{corollary}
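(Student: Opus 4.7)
The plan is to instantiate the generic value iteration framework with the succinct universal tree just constructed, and then simplify the resulting bound using an elementary estimate on $2^{\lceil \log(n) \rceil}$.

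First, I would invoke the correctness and complexity theorem for the generic value iteration algorithm, which states that for any $(n, d/2)$-universal tree $T$ the running time is $O(m \log(n) \log(d) \cdot |T|)$. This immediately reduces the corollary to exhibiting an $(n, d/2)$-universal tree whose number of leaves matches the claimed bound up to a factor of $n$.

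Next, I would apply the preceding theorem with $h = d/2$, which constructs such a tree with $f(n, d/2)$ leaves and establishes the upper bound
\[
f(n, d/2) \le 2^{\lceil \log(n) \rceil} \binom{\lceil \log(n) \rceil + d/2 - 1}{\lceil \log(n) \rceil}.
\]
Substituting into the generic complexity yields $O\bigl(m \log(n) \log(d) \cdot 2^{\lceil \log(n) \rceil} \binom{\lceil \log(n) \rceil + d/2 - 1}{\lceil \log(n) \rceil}\bigr)$.

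Finally, I would use the trivial inequality $2^{\lceil \log(n) \rceil} \le 2n$ to absorb the $2^{\lceil \log(n) \rceil}$ factor into the stated $n$ factor, producing the bound in the corollary statement. Since every step is a direct invocation or a one-line estimate, there is no genuine obstacle here; the proof is essentially a bookkeeping exercise combining the two preceding theorems, and the main purpose of writing it out is to make the explicit quasipolynomial running time of the instantiated algorithm visible to the reader.
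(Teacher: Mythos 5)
Your proposal is correct and follows exactly the (implicit) reasoning the paper intends: plug the upper bound on $f(n,d/2)$ into the generic complexity $O(m \log(n)\log(d)\cdot|T|)$, then absorb $2^{\lceil\log n\rceil}\le 2n$ into the $n$ factor. This is the same bookkeeping the paper leaves unstated after its analysis of $f$.
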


\subsection{Lower bounds on universal trees}
\begin{theorem}
Any $(n,h)$-universal tree has at least $g(n,h)$ leaves, where $g$ satisfies the following:
$$\begin{array}{lll}
g(n,h) & = & \sum_{\delta = 1}^n g(\lfloor n / \delta \rfloor,h-1), \\
g(n,1) & = & n, \\
g(1,h) & = & 1.
\end{array}$$
A lower bound is given by
\[
g(n,h) \ge \binom{\lfloor \log(n) \rfloor + h - 1}{\lfloor \log(n) \rfloor}.
\]
\end{theorem}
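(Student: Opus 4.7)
I plan to prove the lower bound $|T| \geq g(n, h)$ for any $(n, h)$-universal tree $T$ by induction on $h$, and then derive the closed-form binomial bound from the recurrence by an analysis dual to the upper-bound computation in the preceding subsection.

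The base case $h = 1$ is immediate: an $(n, 1)$-universal tree must have a root with at least $n$ leaf-children in order to embed the star with $n$ leaves, so $|T| \geq n = g(n, 1)$. For the inductive step, given an $(n, h)$-universal $T$ with root-children $u_1, \ldots, u_r$ and subtrees $T_i$ of height $h-1$ with $\ell_i = |T_i|$ leaves, I would introduce, for each $\delta \in \{1, \ldots, n\}$ and each height-$(h-1)$ tree $s$ with $\lfloor n/\delta \rfloor$ leaves, the ``test tree'' $t^s_\delta$ consisting of a root with $\delta$ children each carrying $s$ as subtree. Since $t^s_\delta$ has at most $\delta \lfloor n/\delta \rfloor \leq n$ leaves, it embeds in $T$, yielding $\delta$ ordered positions $j_1 < \cdots < j_\delta$ at which $s$ embeds into the corresponding $T_{j_k}$. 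Varying $s$ and also considering ``mixed'' templates $t^{s_1, \ldots, s_\delta}_\delta$ produces a family of ordered systems of distinct representatives (SDR) constraints on the children, which I would combine with the inductive hypothesis applied to the $T_i$'s to obtain $\sum_i \ell_i \geq \sum_{\delta=1}^n g(\lfloor n/\delta\rfloor, h-1) = g(n, h)$.

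The hardest part of the plan is this last combination. The naive route---claiming that $\delta$ of the children are each individually $(\lfloor n/\delta\rfloor, h-1)$-universal and summing their inductive lower bounds---fails already at $n = 2$, $h = 3$: one can exhibit a $(2, 3)$-universal tree with exactly $g(2, 3) = 4$ leaves whose two root-children are the height-2 trees $(1, 1)$ and $(2)$, each of two leaves, neither being $(2, 2)$-universal. The bound nevertheless holds because the required height-2 shapes are covered jointly across the two children. The correct argument therefore needs a more subtle accounting that charges each $\delta$ a contribution of $g(\lfloor n/\delta\rfloor, h-1)$ leaves distributed (possibly fractionally) across the children consistently with the ordered-SDR constraints; I expect this to be the main technical subtlety of the proof.

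For the closed-form bound, setting $G(p, h) := g(2^p, h)$, isolating the $\delta = 1$ term of the recurrence and restricting the remainder to the even values $\delta = 2\delta'$ gives
\[
G(p, h) \geq g(2^p, h-1) + \sum_{\delta'=1}^{2^{p-1}} g(\lfloor 2^{p-1}/\delta'\rfloor, h-1) = G(p, h-1) + G(p-1, h).
\]
With boundary values $G(p, 1) = 2^p \geq 1$ and $G(0, h) = 1$, Pascal's identity $\binom{p+h-1}{p} = \binom{p+h-2}{p} + \binom{p+h-2}{p-1}$ gives $G(p, h) \geq \binom{p + h - 1}{p}$ by induction. Monotonicity of $g$ in $n$ (immediate from the recurrence) then yields $g(n, h) \geq g(2^{\lfloor \log n \rfloor}, h) \geq \binom{\lfloor \log n \rfloor + h - 1}{\lfloor \log n \rfloor}$.
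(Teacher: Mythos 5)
Your outline is right on the boundary conditions, the recurrence-to-binomial computation (in fact your isolation of the $\delta=1$ term and of the even $\delta$'s is a slightly cleaner route to Pascal's identity than the paper's sum over $\delta = 2^k$), and the final monotonicity step. But the central inductive claim is left open: you explicitly acknowledge that the ``more subtle accounting'' needed to close the argument is unknown to you, and the framework you propose (ordered systems of distinct representatives among the root's children) is not the one that works.

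The paper's key move is to decompose $T$ not at the root but at the \emph{parents of the leaves}, i.e.\ at depth $h-1$. Fix $\delta \in [1,n]$ and let $T_\delta$ be the tree obtained from $T$ by deleting all leaves and then keeping only those depth-$(h-1)$ nodes that had degree $\geq \delta$ in $T$; the leaves of $T_\delta$ are exactly those nodes. One shows $T_\delta$ is $(\lfloor n/\delta\rfloor, h-1)$-universal: given any ordered tree $t$ of height $h-1$ with $\lfloor n/\delta\rfloor$ leaves, attach $\delta$ fresh children to each leaf to obtain $t_+$, which has at most $n$ leaves of depth $h$ and hence embeds into $T$; each former leaf of $t$ now has $\delta$ children and is therefore mapped to a depth-$(h-1)$ node of $T$ of degree $\geq \delta$, i.e.\ to a leaf of $T_\delta$, so the restriction of the embedding gives $t \hookrightarrow T_\delta$. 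By induction $|T_\delta| \geq g(\lfloor n/\delta\rfloor, h-1)$, i.e.\ the number of depth-$(h-1)$ nodes of $T$ with degree $\geq \delta$ is at least $g(\lfloor n/\delta\rfloor, h-1)$. The conclusion is then a one-line double count: the number of leaves of $T$ equals $\sum_{v} \deg(v)$ (sum over depth-$(h-1)$ nodes), and
\[
\sum_{v} \deg(v) \;\geq\; \sum_{\delta=1}^{n} \bigl|\{v : \deg(v) \geq \delta\}\bigr| \;\geq\; \sum_{\delta=1}^{n} g\bigl(\lfloor n/\delta\rfloor, h-1\bigr) \;=\; g(n,h).
\]
Your $n=2,\,h=3$ example correctly illustrates that no single child of the root need be $(\lfloor n/\delta\rfloor,h-1)$-universal; the fix is not to redistribute credit fractionally among the children but to look at the other end of the tree, where the trees $T_\delta$ for different $\delta$ are nested and the counting becomes a clean sum over degree thresholds. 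As written, your proposal does not contain a proof of the theorem.
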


This lower bound shares some similarities with a result from Goldberg and Lifschitz~\cite{GL68},
which is for universal trees of a different kind: 
the height is not bounded and the children of a node are not ordered.

\begin{proof}
We proceed by induction. The bounds are clear for $h = 1$ or $n = 1$.

Let $T$ be a $(n,h)$-universal tree, and $\delta \in [1,n]$. 
We claim that the number of nodes at depth $h-1$ 
of degree greater to or larger than $\delta$ is at least $g(\lfloor n / \delta \rfloor,h-1)$.

\vskip1em
Let $T_\delta$ be the subtree of $T$ obtained by removing all leaves and all nodes at depth $h-1$
of degree less than $\delta$: the leaves of the tree $T_\delta$ have height exactly $h-1$.

We argue that $T_\delta$ is $(\lfloor n / \delta \rfloor,h-1)$-universal.
Indeed, let $t$ be a tree with $\lfloor n / \delta \rfloor$ leaves all at depth $h-1$.
To each leaf of $t$ we append $\delta$ children, yielding the tree $t_+$ which has $\lfloor n / \delta \rfloor \cdot \delta \le n$ leaves 
all at depth~$h$.
Since $T$ is $(n,h)$-universal, the tree $t_+$ embeds into $T$.
Observe that the embedding induces an embedding of $t$ into $T_\delta$,
since the leaves of $t$ have degree $\delta$ in $t_+$, hence are also in $T_\delta$.

\vskip1em
So far we proved that the number of nodes at depth $h-1$ 
of degree greater to or larger than $\delta$ is at least $g(\lfloor n / \delta \rfloor,h-1)$.
Now, note that the sum over $\delta \in [1,n]$ of the number of nodes at depth $h-1$ 
of degree greater to or larger than $\delta$ is a lower bound on the number of leaves,
which concludes.
\end{proof}

\subsubsection*{Analysis of the function $g$}

Define $G(p,h) = g(2^p,h)$ for $p \ge 0$ and $h \ge 1$.
Then we have
$$\begin{array}{lll}
G(p,h) & \ge & \sum_{k = 0}^p G(p-k,h-1), \\
G(p,1) & \ge & 1, \\
G(0,h) & = & 1.
\end{array}$$
To obtain a lower bound on $G$ we proceed similarly as for $F$.
We define $\overline{G}$ by
$$\begin{array}{lll}
\overline{G}(p,h) & = & \overline{G}(p,h-1) + \overline{G}(p-1,h), \\
\overline{G}(p,1) & = & 1, \\
\overline{G}(0,h) & = & 1,
\end{array}$$
so that $G(p,h) \ge \overline{G}(p,h)$.
Define the bivariate generating function 
\[
\G(x,y) = \sum_{p \ge 0, h \ge 1} \overline{G}(p,h) x^p y^h.
\]
Plugging the inductive equalities we obtain
\[
\G(x,y) = \frac{y}{1 - x - y},
\]
from which we extract that $\overline{G}(p,h) = \binom{p+h-1}{p}$, implying that $G(p,h) \ge \binom{p+h-1}{p}$.
Putting everything together we obtain
\[
g(n,h) \ge \binom{\lfloor \log(n) \rfloor + h - 1}{\lfloor \log(n) \rfloor}.
\]

The term $\binom{\lfloor \log(n) \rfloor + h - 1}{\lfloor \log(n) \rfloor}$ was analysed in depth in~\cite{JL17}
for various regimes relating $h$ and $n$.
It is quasipolynomial, inducing a quasipolynomial lower bound on the time complexity
of any instance of the generic value iteration algorithm for parity games.

\vskip1em
The upper and lower bounds do not match perfectly.
However, 
\[
\frac{f(n,h)}{g(n,h)} \le 2^{\lceil \log(n) \rceil} \frac{\lfloor \log(n) \rfloor + h}{\lfloor \log(n) \rfloor} = O(n h),
\]
\textit{i.e.} they are polynomially related, so it is fair to say that they \textit{almost} match.



\section{Perspectives}

We showed that the two versions of the value iteration algorithm, namely small progress measures and succinct progress measures,
can be seen as instances of a generic value iteration algorithm based on different universal trees.

By proving almost tight bounds on the size of universal trees essentially matching the succinct universal tree of Jurdzi{\'n}ski and Lazi{\'c},
we show that their result is optimal in this framework.
The bounds are not tight; it would be satisfying to sharpen the lower bound. 
We conjecture that the succinct universal tree we construct in this paper is actually optimal, meaning that there exist no smaller universal tree.

\vskip1em
How to proceed with the quest for a polynomial time algorithm for solving parity games?
The other quasipolynomial time algorithm due to Calude et al~\cite{CJKL017} does not fit the framework we introduce here,
hence is not subjected to the quasipolynomial lower bound proved in this paper.

\vskip1em
Boja{\'n}czyk and Czerwi{\'n}ski~\cite{BC17} offer an interesting perspective on the algorithm of Calude et al, showing that
it provides a solution to the following separation problem.

\vskip1em
We consider infinite words over the alphabet $V$.
A cycle is a word $v \cdots v$. It is even if the largest priority is even, and odd otherwise. 
We define two languages:
\[
\begin{array}{l}
\AllEvenCycles = \set{ \play \in V^\omega \mid \text{all cycles in } \play \text{ are even}}, \\
\AllOddCycles = \set{ \play \in V^\omega \mid \text{all cycles in } \play \text{ are odd}}.
\end{array}
\]
We look at deterministic safe automata: all states are accepting, a word is rejected only if there exists no run for it.
Such automata recognise exactly the set of topologically closed languages over infinite words.
 
The separation problem reads: construct a deterministic safe automaton recognising a language $L \subseteq V^\omega$ such that
\begin{itemize}
	\item $\AllEvenCycles \subseteq L$;
	\item $L \cap \AllOddCycles = \emptyset$,
\end{itemize}
as illustrated in Figure~\ref{fig:separation}.

\begin{figure}[!ht]
\centering
\includegraphics[scale=.3]{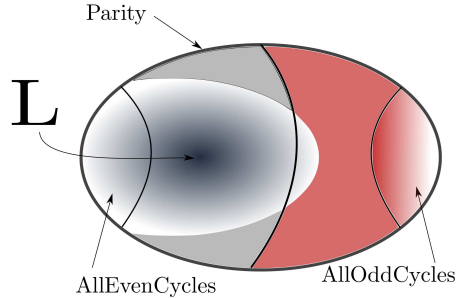}
\caption{The separation problem.}
\label{fig:separation}
\end{figure}

\begin{lemma}
If $L$ is a solution to the separation problem, then the winning regions of $\Parity$ and $L$ coincide.
\end{lemma}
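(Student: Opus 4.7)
The plan is to use positional determinacy of parity games (established in Section~\ref{sec:zielonka}) and to translate positional winning strategies between the two games. Concretely, I would prove the two inclusions $W_E(\Parity) \subseteq W_E(L)$ and $W_A(\Parity) \subseteq W_A(L)$ separately, and then combine them with parity determinacy $W_E(\Parity) \cup W_A(\Parity) = V$ and the trivial disjointness $W_E(L) \cap W_A(L) = \emptyset$ to also obtain $W_E(L) \subseteq W_E(\Parity)$.

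For the first inclusion, let $v \in W_E(\Parity)$ and let $\sigma$ be a positional winning strategy of Eve from $v$; write $G_\sigma$ for the subarena in which Eve's outgoing edges are restricted to those picked by $\sigma$. Because $\sigma$ is winning, every simple cycle of $G_\sigma$ reachable from $v$ must have even maximum priority, as otherwise Adam could route to such a cycle and loop inside it forever, producing a $\sigma$-consistent play with odd maximum priority appearing infinitely often. I would then upgrade this to the stronger claim that every closed walk $w_0 w_1 \cdots w_k = w_0$ in $G_\sigma$ has even maximum priority, which is exactly what $\AllEvenCycles$ requires. Given such a walk with maximum priority $p$ attained at some $w_i$, I would iteratively excise loops between repeated vertices to extract a simple cycle through $w_i$ whose vertex set is contained in the walk; its maximum priority is then both $\leq p$ (by containment) and $\geq p$ (it contains $w_i$), hence exactly $p$, and by the cycle characterisation above it is even, so $p$ is even. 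Consequently every $\sigma$-consistent play from $v$ lies in $\AllEvenCycles \subseteq L$, giving $v \in W_E(L)$.

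The second inclusion is dual: a positional winning strategy $\tau$ for Adam in the parity game makes every simple cycle of $G_\tau$ reachable from $v$ have odd maximum priority, and the same loop-removal argument then shows every closed-walk factor of a $\tau$-consistent play lies in $\AllOddCycles$; by disjointness $L \cap \AllOddCycles = \emptyset$, no such play belongs to $L$, and $\tau$ witnesses $v \in W_A(L)$.

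The main obstacle is the graph-theoretic upgrade from simple cycles to arbitrary closed-walk factors: the cycle characterisation of positional parity strategies naturally speaks about simple cycles of $G_\sigma$, whereas $\AllEvenCycles$ and $\AllOddCycles$ are defined via arbitrary factors $w \cdots w$ of a play. The loop-removal-through-the-maximum-priority-vertex trick above is the needed bridge, and once it is in place the remainder of the argument is routine bookkeeping with determinacy.
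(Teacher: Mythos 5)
Your proof is correct and follows essentially the same route as the paper, which also argues via positional determinacy that a positional parity-winning strategy for Eve ensures $\AllEvenCycles$ (and dually for Adam and $\AllOddCycles$), then concludes from the separation property. The only difference is that you spell out the closed-walk-to-simple-cycle decomposition that the paper leaves implicit in the phrase ``also ensures $\AllEvenCycles$.''
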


Consequently, solving the parity game is equivalent to solving a safety game with $n \cdot |L|$ vertices
and $m \cdot |L|$ edges, where $|L|$ is the number of states of a deterministic automaton recognising $L$. 
Since solving a safety game can be done in linear time, more precisely in $O(m)$, 
this gives an algorithm for solving parity games whose running time is $O(m \cdot |L|)$.

\begin{proof}
This relies on the positional determinacy of parity games. 
A positional strategy for Eve ensuring $\Parity$ also ensures $\AllEvenCycles$, hence $L$. 
Conversely, a positional strategy for Adam ensuring the complement of $\Parity$ also ensures $\AllOddCycles$, hence the complement of $L$.
\end{proof}

Boja{\'n}czyk and Czerwi{\'n}ski~\cite{BC17} cast the data structure constructed in the algorithm of Calude et al~\cite{CJKL017}
as a solution of the separation problem.

\begin{theorem}[\cite{BC17}]
There exists a deterministic safe automaton solving the separation problem with $n^{O(\log(d))}$ states.
\end{theorem}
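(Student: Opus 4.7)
The plan is to extract the automaton from the quasipolynomial $(n, d/2)$-universal tree $T$ constructed in the previous theorem. Set the state space of $\mathcal{A}$ to $L(T)$, take the initial state to be $\ell_{\min}$, and define the transition on reading $v \in V$ of priority $p = c(v)$ to send a state $\ell$ to the $\ge$-smallest leaf $\ell' \in L(T)$ satisfying $\ell' \ge_p \ell$, strict in $\ge_p$ when $p$ is odd; if no such $\ell'$ exists, the transition is undefined and $\mathcal{A}$ rejects by the safe semantics. Since $|L(T)| = n^{O(\log d)}$, the state-count requirement is immediate.

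The easier direction, $L(\mathcal{A}) \cap \AllOddCycles = \emptyset$, proceeds by contradiction. If $\play$ is accepted, the infinite run $(\ell_k)_{k \in \N}$ lives in the finite set $L(T)$, so by pigeonhole $\ell_i = \ell_j$ for some $i < j$, and $\play_i, \ldots, \play_j$ is a cycle of $\play$ with some maximum priority $p$. The transition inequality $\ell_{k+1} \ge_{p_k} \ell_k$ implies $\ell_{k+1} \ge_p \ell_k$ whenever $p_k \le p$, by tracing the highest coordinate at which the two lexicographic truncations differ; moreover the unique step reading the letter of priority $p$ gives strict inequality in $\ge_p$ whenever $p$ is odd. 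Chaining around the cycle would then yield $\ell_j >_p \ell_i$, contradicting $\ell_i = \ell_j$, so $p$ must be even and $\play \notin \AllOddCycles$.

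The harder direction, $\AllEvenCycles \subseteq L(\mathcal{A})$, requires showing that the run on any $\play$ with only even cycles never exhausts $T$. The plan is to associate, inductively on the prefix read, a totally ordered tree $t_i$ of height $d/2$ with at most $n$ leaves encoding the automaton's progress on $\play_{\le i}$, so that the state $\ell_i$ corresponds to a point in an embedding of $t_i$ into $T$; since $T$ is $(n, d/2)$-universal, such an embedding always exists, hence the automaton never rejects. The combinatorial heart is the leaf bound: a leaf of $t_i$ at level $p$ records a reading of an odd priority $p$ that has not subsequently been overridden by a higher even priority; if two distinct leaves at the same level were tied to the same letter $v \in V$, then the infix of $\play$ between those two readings of $v$ would be a cycle of $\play$ with maximum priority $p$ (odd), contradicting $\play \in \AllEvenCycles$, so the leaves inject into $V$ and their count is at most $n$.

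The main obstacle will be formalising the progress tree $t_i$ and carrying out the leaf-bound argument rigorously; the construction mirrors the progress-measure dynamics of Jurdzi{\'n}ski--Lazi{\'c} viewed as a dynamic data structure updated upon each letter read, with the pigeonhole on distinct letters of $\play$ supplying the bound and the universality of $T$ then guaranteeing that the automaton's state remains defined throughout.
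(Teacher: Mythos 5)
The paper itself does not prove this theorem: it is cited from [BC17], and the paper explicitly notes that Boja{\'n}czyk and Czerwi{\'n}ski obtain it by recasting the Calude et al.\ data structure (succinct counting / play summaries) as a separating automaton. Your route --- extracting the automaton from a universal tree --- is genuinely different; it is the viewpoint of the later paper by Czerwi{\'n}ski, Daviaud, Fijalkow, Jurdzi{\'n}ski, Lazi{\'c} and Parys (\emph{Universal trees grow inside separating automata}, SODA 2019), and it is indeed a valid way to get the bound. So the comparison is not ``same approach versus different approach'' to the paper's proof, but ``your approach versus the cited one''. The construction of the automaton and the first inclusion are essentially right, modulo a small imprecision: to produce a \emph{cycle of $\play$} you must pigeonhole on the pair $(\play_k,\ell_k)$, not just on $\ell_k$, since $\ell_i=\ell_j$ alone does not make $\play_i\cdots\play_j$ a cycle in the sense of the definition.

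The inclusion $\AllEvenCycles \subseteq L(\mathcal A)$ is where the real work lies, and the sketch has a gap that is not merely a matter of ``formalising''. Your plan is to maintain a tree $t_i$ with at most $n$ leaves and the invariant that $\ell_i$ sits inside an embedding of $t_i$ into $T$. Two problems. First, the natural progress tree (the tree of counter values, where the coordinate at odd level $p$ counts readings of priority $p$ since the last priority $>p$) does \emph{not} have at most $n$ leaves in general: take $n=4$, $d=4$, vertices $a,b$ of priority $1$, $f$ of priority $3$, $g$ of priority $4$, and $\play=(g\,a\,b\,g\,f\,a\,b)^\omega$; this is in $\AllEvenCycles$, yet the counter values visited are $(0,0),(0,1),(0,2),(1,0),(1,1),(1,2)$, a tree with $6>n$ leaves, and that tree does not even embed into the $8$-leaf $(4,2)$-universal tree. (The leaf-injection argument you give only bounds, for each odd $p$, the branching at level $p$ \emph{within a single level-$p$ epoch}; it does not bound the total leaf count of the tree across epochs.) Second, even when a suitable $t_i$ exists, ``an embedding exists by universality'' does not by itself put $\ell_i$ inside it: the transition is defined by minimisation over $T$, so one must show that the greedy-minimal state is dominated by some image of $t_i$, and keep this domination as $t_i$ evolves. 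The standard way to do this is to (a) prove the transition function is monotone, and (b) exhibit a single valid assignment $\lambda:\N\to L(T)$ respecting the $\ge_{p_k}$-constraints that dominates the run; then domination follows by induction. But step (b) is precisely the crux, and the counterexample above shows it is not obtained by the naive counter assignment --- one has to merge epochs across resets at even priorities, which is the non-trivial combinatorial content of the Czerwi{\'n}ski et al.\ result. As written, that step is missing.
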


The next question is then: can we construct smaller solutions to the separation problem,
or can we prove lower bounds?

\section*{Acknowledgments}

The notion of universal trees was hinted at me by Marcin Jurdzi{\'n}ski and Ranko Lazi{\'c}.
They largely contributed to the making of this paper, and I thank them for their support.
I am very grateful to Albert Atserias for pointing out to me the literature on universal graphs,
Amos Korman for digging into the connection with distance labelings on trees,
Pawe{\l} Gawrychowski for discussions on lower bounds for universal trees,
and {\'E}lie de Panafieu for his expertise on combinatorial analysis.

\bibliography{bib}

\bibliographystyle{alpha}

\end{document}